\newtheorem{theorem}{Theorem}[subsection]
\newtheorem{prop}[theorem]{Proposition}
\newtheorem{lem}[theorem]{Lemma}
\newtheorem{coro}[theorem]{Corollary}
\newtheorem{thm}[theorem]{Theorem}
\begin{document}
\setlength{\oddsidemargin}{0cm}
\setlength{\evensidemargin}{0cm}

\title{Some Varieties of Lie Rings}

\author{Yin Chen}

\address{
School of Mathematics and Statistics, Northeast Normal University,
Changchun 130024, P.R. China and
Chern Institute of Mathematics, Nankai University, Tianjin 300071, P.R. China
}

\email{ychen@nenu.edu.cn}

\author{Runxuan Zhang}

\address{School of Mathematics and Statistics, Northeast Normal University,
 Changchun 130024, P.R. China}

\email{zhangrx728@nenu.edu.cn}

\date{\today}

\def\shorttitle{Some Varieties of Lie Rings}

\begin{abstract}
In this paper, several theorems of Macdonald \cite{Mac1961,Mac1962} on the varieties of nilpotent groups will be generalized  to the case of Lie rings.
We consider  three varieties of Lie rings of any characteristic associated  with some equations (see Eqs. (\ref{eq:1.1})-(\ref{eq:1.3}) below).
We prove  that each Lie ring in variety $(\ref{eq:1.1})$ is nilpotent of exponent at most $n+2$; if $L$ is a Lie ring in variety $(\ref{eq:1.2})$, then
 $L^{2}$ is nilpotent of exponent at most $n+1$; and each Lie ring in variety $(\ref{eq:1.3})$ is solvable of length at most $n+1$. Finally, we also discuss
 some varieties of solvable Lie rings and the varieties of  Lie rings defined  by the properties of subrings.
\end{abstract}

\subjclass[2010]{17A30;17B30.}

\keywords{Lie ring; nilpotency; variety.}

\maketitle
\baselineskip=15pt


\section{Introduction}
\setcounter{equation}{0}
\renewcommand{\theequation}
{1.\arabic{equation}}

\setcounter{theorem}{0}
\renewcommand{\thetheorem}
{1.\arabic{theorem}}

A \textit{class of groups} $\mathfrak{X}$ is a class whose members are groups and which has the following properties: (1)
$\mathfrak{X}$ contains a group of order 1; (2) $H\simeq G\in \mathfrak{X}$ always implies that $H\in \mathfrak{X}$.
A \textit{variety of groups} is a class of groups which are defined by sets of equations.
For example, all abelian groups form the variety associated with the equation  $(x,y)=1$, where $(x,y)=xyx^{-1}y^{-1}$ is the commutator of elements $x$ and $y$. The study of varieties of groups is a subject with a long history.
In 1961, Macdonald \cite{Mac1961} proved that two kinds of varieties of groups associated with equations
\begin{eqnarray*}
(x,y_{1},\cdots,y_{n},x)&=&1;\\
{}(x,y_{1},\cdots,y_{n};x,y_{1},\cdots,y_{m})&=&1
\end{eqnarray*}  are both nilpotent, and the corresponding nilpotent exponents was also calculated.
In the same paper, Macdonald  pointed out that the analogous results may be valid for some varieties of Lie rings, but he did not give any proof. Later, a lot of papers addressing the varieties of finite nilpotent groups and solvable groups appeared, such as \cite{Kik1967,Lev1964,Lev1968,Mac1962,Mac1965}, in which  the attention was focused  on the subgroups of solvable groups with some restrictions.

As we know, there are a lot of similar concepts and results between finite group theory and finite-dimensional Lie algebras.
It is natural to bring some methods in finite group theory into the study of  Lie algebras.
Recently, Suanmali \cite{Sua2008} (or \cite{Sua2007}) used an analogous idea in the theory of group varieties to
investigate the varieties of  Lie algebras.
She considered the exponent bound problem  for some varieties of nilpotent Lie algebras, and extended Macdonald's results in \cite{Mac1961, Mac1962} to
finite-dimensional Lie algebras over a field of characteristic  not 2 and 3. Apart from this, very
little results seems to be known about the varieties of Lie algebras.
In this paper, we  follow basically the routines of Macdonald and Suanmali, and  reconsider the varieties of Lie rings.
We also obtain some analogues of Macdonald and Suanmali's results. But, the advantage of our results is that the restriction of characteristic and the structure of vector spaces are taken away.

Specifically,
our first purpose is to discuss the following three varieties of Lie rings of any characteristic associated  with equations
\begin{eqnarray}
(x,y_{1},\cdots,y_{n},x)&=&0;\label{eq:1.1}\\
(x,y;x_{1},x_{2},\cdots,x_{2n-1},x_{2n};x,z)&=&0;\label{eq:1.2}\\
(x,y;x_{1},x_{2},\cdots,x_{2n};x,z;y_{1},y_{2},\cdots y_{m})&=&0.\label{eq:1.3}
\end{eqnarray}
We shall extend some results  in \cite{Sua2008}, such as several corresponding Macdonald's theorems, to the varieties of Lie rings (Theorems \ref{th:2.1}, \ref{thm:2.3} and \ref{thm:2.4}).
The second purpose is to prove
that the variety $V_{n}$ associated with the equation
 $$(x_{1},\cdots,x_{n},x_{n+1},x_{n+2})=(x_{1},\cdots,x_{n},x_{n+2},x_{n+1})$$ is precisely the variety
 $V'_{n}$ associated with the equation
 $(x_{1},\cdots,x_{n};x_{n+1},x_{n+2})=0$; and  the variety $P_{n}$ associated with the equation
 $$(x_{1},\cdots,x_{i},x_{i+1},\cdots, x_{n})=(x_{1},\cdots,x_{i+1},x_{i},\cdots, x_{n})$$ is precisely the variety
 $P'_{n}$ associated with the equation
 $(x_{1},\cdots,(x_{i},x_{i+1}),\cdots, x_{n})=0.$ As a corollary, we obtain that each Lie ring $L$ in the variety  $P_{n}$ is solvable.
 We remark that the corresponding results in finite group theory are due to Levin (\cite{Lev1964}, Corollary 2.4) and Kikodze (\cite{Kik1967}, Corollary 2) respectively.
Moreover, we shall show that a Lie ring  $L$ satisfies
$
(x_{1},x_{2},\cdots,x_{n})=(x_{1},x_{\rho(2)},\cdots,x_{\rho(n)})
$
for all non-identical permutations $\rho$
of $\{2,\cdots,n\}$
if and only if $L$ is nilpotent of exponent at most $n-1$ (Theorem \ref{thm:2.8}).
In the last section, we also discuss
 some varieties of Lie ring defined  by properties of subrings.

To do this, we first need some preliminaries.

A nonassociative ring $L$ with the  multiplication $[-,-]$ is called a \textit{Lie ring}
if the following identities are satisfied:
\begin{eqnarray}
[x,y]+[y,x]&=&0~~(\textrm{skew symmetry});\\
{}[x,[y,z]]+[y,[z,x]]+[z,[x,y]]&=&0 ~~(\textrm{Jacobi identity}).
\end{eqnarray}
for all $x,y,z \in L$.
Let $x_{1},\cdots,x_{n},y_{1},\cdots,y_{m}$ be elements in a  Lie ring $L$, the \textit{commutator} $(x_{1},\cdots,x_{n})$ in $L$ is defined as
$$(x_{1},x_{2}):=[x_{1},x_{2}],\textrm{and } (x_{1},\cdots,x_{n}):=((x_{1},\cdots,x_{n-1}),x_{n}),\textrm{ for all } n>2.$$
Moreover, we define
$(x_{1},\cdots,x_{n};y_{1},\cdots,y_{m}):=((x_{1},\cdots,x_{n}),(y_{1},\cdots,y_{m})).$

Let $\rho$ be a permutation of $\{3,\cdots,k\}$. If
a Lie ring $L$ satisfies the equation  $$(x_{1},x_{2},x_{3},\cdots,x_{k})=(x_{1},x_{2},x_{\rho(3)},\cdots,x_{\rho(k)}),$$
we say that $L$ \textit{satisfies} $C(k,\rho)$.
 If $L$ satisfies $C(k,\rho)$ for all permutations $\rho$ of $\{3,\cdots,k\}$, we say that $L$ \textit{satisfies} $C(k)$.

For a  Lie ring $L$, we define the \textit{lower central series}
$$L^{1}:=L \supset L^{2}:=[L,L]\supset\cdots\supset L^{n}:=[L,L^{n-1}]\supset\cdots$$ for all $n=2,3,\cdots$.
A Lie ring $L$ is said to be \textit{nilpotent} if $L^{n+1}=\{0\}$ for some nonnegative integer $n$. If $L^{n+1}=\{0\}$ but $L^{n}\neq\{0\}$, then
we say that $L$ has \textit{nilpotent exponent} $n$.
We also define a different kind of powers of $L$,
$$L^{(0)}:=L \supset L^{(1)}:=[L,L]\supset\cdots\supset L^{(n+1)}:=[L^{(n)},L^{(n)}]\supset\cdots.$$ Every $L^{(n)}$ is an ideal of $L$ and
$L$ is said to be \textit{solvable} if $L^{(n+1)}=\{0\}$ for some nonnegative integer $n$. Moreover if $L^{(n+1)}=\{0\}$ but $L^{(n)}\neq\{0\}$, then
we say that $L$ has \textit{solvable length} $n$.
A  Lie ring $L$  is said to be \textit{abelian} if $L^{2}=\{0\}$; $L$ is said to be \textit{metabelian} if $L^{2}$ is abelian.

A \textit{class of Lie rings} $\mathfrak{L}$ is a class whose members are Lie rings and which has the following properties: (1)
$\mathfrak{L}$ contains the zero ring; (2) $L\simeq E\in \mathfrak{L}$ always implies that $L\in \mathfrak{L}$.
A \textit{variety of Lie rings} is a class of Lie rings which are defined by sets of equations. For example, all abelian Lie rings form the variety associated with the equation  $[x,y]=0$.

\begin{lem}\label{lem:1.1}
Let $L$ be a  Lie ring, then $L$ is metabelian if only and if $(u,v,x,y)=(u,v,y,x)$ for all elements
$u,v,x,y$ in $L$.
\end{lem}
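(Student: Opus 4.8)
The plan is to reduce both implications to a single commutator identity obtained from the Jacobi identity. First I would unwind the definitions: $(u,v,x,y)=[[[u,v],x],y]$ and $(u,v,y,x)=[[[u,v],y],x]$. Setting $a:=[u,v]\in L^{2}$, skew symmetry and the Jacobi identity applied to $a,x,y$ give $[[a,x],y]-[[a,y],x]=[a,[x,y]]$. Hence, for all $u,v,x,y\in L$,
\[
(u,v,x,y)-(u,v,y,x)=[[u,v],[x,y]].
\]

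Both directions of the lemma now follow from this identity. If $L$ is metabelian, then $[L^{2},L^{2}]=\{0\}$, and since $[u,v],[x,y]\in L^{2}$ the right-hand side vanishes, so $(u,v,x,y)=(u,v,y,x)$ for all $u,v,x,y$. Conversely, if that equation holds identically, the displayed identity forces $[[u,v],[x,y]]=0$ for all $u,v,x,y\in L$. Because the Lie bracket is biadditive and $L^{2}$ is generated as an additive group by the elements $[x,y]$, writing arbitrary elements of $L^{2}$ as finite $\mathbb{Z}$-linear combinations of such brackets and expanding shows $[L^{2},L^{2}]=\{0\}$, i.e. $L$ is metabelian.

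There is no serious obstacle here; the argument is a short Jacobi-identity manipulation together with bilinearity. The only point deserving a word of care is the final passage from the vanishing of $[[u,v],[x,y]]$ on bracket generators to the vanishing of $[L^{2},L^{2}]$, which is immediate once one notes that no hypothesis on the characteristic or on any underlying module structure is needed — additivity of the bracket over $\mathbb{Z}$ suffices.
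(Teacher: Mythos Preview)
Your proof is correct and follows essentially the same approach as the paper: both derive the identity $(u,v,x,y)-(u,v,y,x)=[[u,v],[x,y]]$ from the Jacobi identity (applied to $a=[u,v]$, $x$, $y$) and read off both implications from it. Your presentation is slightly cleaner in that you state this identity once and use it for both directions, and you also make explicit the bilinearity step passing from $[[u,v],[x,y]]=0$ on generators to $[L^{2},L^{2}]=0$, which the paper leaves implicit.
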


\begin{proof} If $L$ is metabelian,  then $(u,v;x,y)=0$. By Jacobi identity, we have
\begin{equation}\label{eq:1.6}
[[u,v],[x,y]]+[x,[y,[u,v]]]+[y,[[u,v],x]]=0.
\end{equation} Thus
$0=[x,[y,[u,v]]]+[y,[[u,v],x]]
=(x;u,v,y)+(u,v,x,y)
=-(u,v,y,x)+(u,v,x,y).
$ That is $(u,v,x,y)=(u,v,y,x)$.
Conversely, we notice that $(u,v,x,y)=(u,v,y,x)$ implies that $[x,[y,[u,v]]]+[y,[[u,v],x]]=0.$
 By Jacobi identity, we have $[u,v;x,y]=0$, as desired.
\end{proof}

\begin{lem}\label{lem:1.2}
Let $L$ be a Lie ring  of characteristic  not 2. If
$(x,y;x,z)=0$ for all $x,y,z\in L$, then $L$ is metabelian.
\end{lem}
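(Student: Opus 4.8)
The plan is to prove the stronger‑looking but equivalent statement that $[[a,b],[c,d]]=0$ for all $a,b,c,d\in L$; since $L^{2}=[L,L]$ is generated as an additive group by the elements $[a,b]$, this says exactly that $L^{2}$ is abelian, i.e.\ that $L$ is metabelian (alternatively one can quote Lemma~\ref{lem:1.1} at the end). Throughout I write $T(a,b,c,d):=[[a,b],[c,d]]$.

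Step 1 (linearisation). Since the hypothesis holds for \emph{every} element of $L$, I substitute $x=u+v$ into $(x,y;x,z)=0$ and expand by bilinearity of the bracket; the two ``diagonal'' summands $[[u,y],[u,z]]$ and $[[v,y],[v,z]]$ vanish by the hypothesis itself, leaving
\begin{equation*}
[[u,y],[v,z]]+[[v,y],[u,z]]=0\qquad\text{for all } u,v,y,z\in L.
\end{equation*}
In the notation above this reads $T(a,b,c,d)=-T(c,b,a,d)$: the function $T$ changes sign when its first and third arguments are interchanged. No assumption on the characteristic is used here.

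Step 2 (a parity obstruction). Besides this new antisymmetry, $T$ evidently changes sign when its first two arguments are swapped and when its last two are swapped (skew‑symmetry of the two inner brackets), and it changes sign under the substitution $(a,b,c,d)\mapsto(c,d,a,b)$ (skew‑symmetry of the outer bracket). Now the transpositions $(1\,2)$, $(1\,3)$, $(3\,4)$ generate the symmetric group $S_{4}$, and the sign character is the only homomorphism $S_{4}\to\{\pm1\}$ that takes the value $-1$ on transpositions; hence $T$ transforms according to the sign character, $T(x_{\sigma(1)},x_{\sigma(2)},x_{\sigma(3)},x_{\sigma(4)})=\operatorname{sgn}(\sigma)\,T(x_{1},x_{2},x_{3},x_{4})$ for all $\sigma\in S_{4}$. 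But $(a,b,c,d)\mapsto(c,d,a,b)$ is the even permutation $(1\,3)(2\,4)$, so this rule forces $T$ to be \emph{fixed} by it, whereas skew‑symmetry of the outer bracket forces $T$ to change \emph{sign} under it. Therefore $2\,T(a,b,c,d)=0$ for all $a,b,c,d\in L$.

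Step 3 (conclusion). Because $L$ has characteristic not $2$, i.e.\ is $2$‑torsion‑free, Step~2 gives $T\equiv 0$, so $[L^{2},L^{2}]=\{0\}$ and $L$ is metabelian. I expect the only point needing care to be the sign bookkeeping in Step~2 — one must check that the four antisymmetry relations really are consistent with (and therefore force) the sign character on $S_{4}$, rather than collapsing to something weaker — but once the linearised identity of Step~1 is in hand this is a short, purely combinatorial verification, and the hypothesis $\operatorname{char}L\neq 2$ is invoked only at the final division by $2$.
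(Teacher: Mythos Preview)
Your proof is correct and follows essentially the same approach as the paper: both linearise the hypothesis by substituting $x=u+v$ to obtain the antisymmetry $T(a,b,c,d)=-T(c,b,a,d)$, then combine this with the built-in skew-symmetries of the brackets to force $T=-T$ and hence $T=0$ in characteristic not~$2$. The paper carries out your Step~2 as an explicit chain of equalities $(a,b;c,d)=-(c,d;a,b)=(a,d;c,b)=(d,a;b,c)=-(b,a;d,c)=-(a,b;c,d)$ rather than packaging the sign bookkeeping via the sign character of $S_{4}$, but the mathematical content is identical.
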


\begin{proof} Assume that $x=s+t$, then
$
0=(s+t,y;s+t,z)
=(s,y;t,z)+(t,y;s,z),
$ and
\begin{equation}\label{eq:1.7}
(s,y;t,z)=-(t,y;s,z).
\end{equation} Let $a,b,c,d$ be arbitrary elements in $L$, then
$
(a,b;c,d)=-(c,d;a,b)=(a,d;c,b)=(d,a;b,c)
=-(b,a;d,c)=-(a,b;c,d).
$ Since the  characteristic of $L$ is not 2, $(a,b;c,d)=0$. Thus $L$ is metabelian.
\end{proof}

\begin{lem}\label{lem:1.3}
Let $L$ be a  Lie ring with $(x,y,x)=0$ for all $x,y\in L$.
\begin{enumerate}
  \item If the characteristic of $L$ is not 3, then $L^{3}=0$.
  \item If the characteristic of $L$ is  3, then $L$ is metabelian.
\end{enumerate}
\end{lem}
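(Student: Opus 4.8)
The starting point is the linearization trick already used for Lemma~\ref{lem:1.2}. From $(x,y,x)=0$ for all $x,y$, substitute $x=a+c$ and expand: the cross terms give
\[
(a,y,c)+(c,y,a)=0,\qquad\text{i.e.}\qquad (a,y,c)=-(c,y,a)
\]
for all $a,c,y\in L$. So the map $(a,y,c)=[[a,y],c]$ is skew in its first and third slots; combined with skew symmetry of $[-,-]$ this already forces a good deal of collapse inside $L^{3}$. The plan is to push this identity through the Jacobi identity to extract, in each of the two characteristic cases, the stated conclusion.

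For part (1) (characteristic not $3$): I would take an arbitrary element $(a,b,c)\in L^{3}$ and show it is zero by exhibiting enough relations among the six expressions obtained by permuting $a,b,c$. Starting from $(a,b,c)=-(c,b,a)$ and the Jacobi identity in the form $(a,b,c)+(b,c,a)+(c,a,b)=0$ (this is just $[[a,b],c]+[[b,c],a]+[[c,a],b]=0$ rewritten with skew symmetry), I expect to derive that $(a,b,c)$ equals, up to sign, each of its cyclic and transposed rearrangements, and that the net effect is a relation of the form $3(a,b,c)=0$. Dividing by $3$ (legitimate by hypothesis on the characteristic) gives $(a,b,c)=0$ for all $a,b,c$, hence $L^{3}=[L,L^{2}]=0$. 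The only subtlety is bookkeeping the signs correctly so that the coefficient really comes out to $\pm 3$ rather than $0$ or $\pm 2$; I would organize this by writing all six permuted commutators in a single display and adding the two linear relations (skewness in outer slots, and Jacobi) in the right combination.

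For part (2) (characteristic $3$): here the coefficient-$3$ relation is vacuous, so $L^{3}$ need not vanish, and the target is instead metabelianity, i.e. $L^{(2)}=[L^{2},L^{2}]=0$. By Lemma~\ref{lem:1.1} it suffices to prove $(u,v,x,y)=(u,v,x',y')$-type symmetry — more precisely, it suffices to show $(u,v,x,y)=(u,v,y,x)$ for all $u,v,x,y$. I would feed the outer-skewness identity $(a,y,c)=-(c,y,a)$, now available with $y$ itself possibly a commutator, into the four-fold commutators: setting $a=(u,v)$ gives $((u,v),y,c)=-(c,y,(u,v))$, and expanding the right side by Jacobi and using outer-skewness once more should, in characteristic $3$, collapse to $(u,v,y,c)=(u,v,c,y)$. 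The main obstacle will be exactly this last reduction: one must use the characteristic-$3$ hypothesis (so that terms with coefficient $3$ or $-3$ drop out) together with repeated applications of outer-skewness and Jacobi, and there is a real risk of going in circles if the substitutions are not chosen carefully. Once $(u,v,x,y)=(u,v,y,x)$ is established, Lemma~\ref{lem:1.1} gives metabelianity immediately.

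In short, both parts reduce to the single linearized identity $(a,y,c)=-(c,y,a)$; part (1) is a finite sign-chasing computation producing the factor $3$, and part (2) is the more delicate argument where the factor $3$ is instead discarded and one must squeeze out the four-variable symmetry of Lemma~\ref{lem:1.1}. I expect part (2) to be where all the work is.
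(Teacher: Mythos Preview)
Your plan for part~(1) is exactly the paper's argument: linearize to get $(a,y,c)=-(c,y,a)$, combine with Jacobi to obtain a relation of the form $3\cdot(\text{triple bracket})=0$, and divide by~$3$.

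For part~(2), however, your route through Lemma~\ref{lem:1.1} does go in circles, just as you feared. From $(a,y,c)=-(c,y,a)$ with $a=(u,v)$ and Jacobi one gets $2(u,v,y,c)=(u,v,c,y)$; in characteristic~$3$ this reads $(u,v,y,c)=-(u,v,c,y)$, which is \emph{antisymmetry} in the last two slots, not the symmetry $(u,v,y,c)=(u,v,c,y)$ that Lemma~\ref{lem:1.1} requires. Iterating the relation only returns $4(u,v,y,c)=(u,v,y,c)$, which is vacuous mod~$3$. The four-variable identity cannot be reached this way without an extra idea.

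The paper's extra idea is to aim for Lemma~\ref{lem:1.2} rather than Lemma~\ref{lem:1.1}: it is enough to show $(x,z;x,y)=0$ for all $x,y,z$, a three-variable identity with a repeated~$x$. Expanding $[[x,z],[x,y]]$ by Jacobi, the term involving $[(x,z),x]=(x,z,x)$ vanishes by hypothesis, leaving $(x,z;x,y)=-(x,z,y,x)$. Now apply the outer-skewness relation (your $(a,y,c)=-(c,y,a)$) with $a=(x,z)$ to get $(x,z,y,x)=-(x,y;x,z)=(x,z;x,y)$. Hence $(x,z;x,y)=-(x,z;x,y)$, so $2(x,z;x,y)=0$; since the characteristic is $3$ (hence not~$2$), this gives $(x,z;x,y)=0$, and Lemma~\ref{lem:1.2} finishes. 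The point you were missing is that the repeated variable in $(x,z;x,y)$ lets the original hypothesis $(x,\cdot,x)=0$ kill a term in the Jacobi expansion, which is exactly what breaks the circularity.
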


\begin{proof}  (1)
 Assume that $x,y,z$ are arbitrary  elements in $L$, then
$$0=(x+z,y,x+z)=(x,y,z)+(z,y,x).$$ Thus
\begin{equation}\label{eq:1.8}
(x,y,z)=-(z,y,x).
\end{equation}
By Jacobi identity, we have
\begin{eqnarray*}
(x;y,z)&=& -(y;z,x)-(z;x,y)=(y,z,x)+(z,x,y)\\
&=&-(x,z,y)-(x,z,y)=-2(x,z,y)\\
&=&2(y,z,x)=-2(x;y,z).
\end{eqnarray*} Notice that the characteristic of $L$ is not 3, so $(x;y,z)=0$ and $L^{3}=0$.

(2) Let $x,y,z$ are any elements in $L$, then
$(x,z;x,y)=(x,y,z,x)=-(x,z;x,y).$ Since the characteristic of $L$ is 3,  $(x,z;x,y)=0$. It follows from
Lemma \ref{lem:1.2}  that $L$ is metabelian.
\end{proof}

\begin{lem}\label{lem:1.4}
Let $S_{n}$ be the symmetric group of degree $n$ and $L$ be a Lie ring. Then the commutator of
 $(y_{1},\cdots,y_{n},x)$ can be expressed as the sum of $2^{n-1}$ commutators of the form
 $\pm(x,y_{\pi(1)},\cdots,y_{\pi(n)})$, where $\pi\in S_{n}$.
\end{lem}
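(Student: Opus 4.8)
The plan is to argue by induction on $n$. The case $n=1$ is immediate, since
\[
(y_{1},x)=[y_{1},x]=-[x,y_{1}]=-(x,y_{1}),
\]
a single commutator of the required shape ($2^{0}=1$). For the inductive step, suppose $n\ge 2$ and that the statement holds for $n-1$ (for every Lie ring and every choice of elements). Put $A:=(y_{1},\cdots,y_{n-1})$, so that $(y_{1},\cdots,y_{n},x)=[[A,y_{n}],x]$, and expand by the Jacobi identity applied to the triple $x,A,y_{n}$:
\[
[[A,y_{n}],x]=[A,[y_{n},x]]-[[x,A],y_{n}].
\]
I would then treat the two summands separately.

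For the second summand, $[x,A]=-[(y_{1},\cdots,y_{n-1}),x]=-(y_{1},\cdots,y_{n-1},x)$; the induction hypothesis rewrites $(y_{1},\cdots,y_{n-1},x)$ as a sum of $2^{n-2}$ commutators $\pm(x,y_{\sigma(1)},\cdots,y_{\sigma(n-1)})$ with $\sigma\in S_{n-1}$, and bracketing on the right by $y_{n}$ then presents $-[[x,A],y_{n}]$ as a sum of $2^{n-2}$ commutators $\pm(x,y_{\sigma(1)},\cdots,y_{\sigma(n-1)},y_{n})$, each already of the desired form.

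The first summand is the delicate one, and this is where the main work lies. Writing $B:=(x,y_{n})=[x,y_{n}]$, a single element of $L$, we have $[y_{n},x]=-B$ and hence
\[
[A,[y_{n},x]]=-[A,B]=[B,A]=-[(y_{1},\cdots,y_{n-1}),B]=-(y_{1},\cdots,y_{n-1},B).
\]
The obstacle is that this summand is essentially the bracket $[(x,y_{n}),(y_{1},\cdots,y_{n-1})]$ of a weight-$2$ commutator with a weight-$(n-1)$ commutator, which does not literally match the left-normed pattern; the resolution is to regard $(x,y_{n})$ as a single symbol $B$ and apply the inductive hypothesis to the $n-1$ elements $y_{1},\cdots,y_{n-1}$ with $B$ in the role of the distinguished last entry. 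This expresses $(y_{1},\cdots,y_{n-1},B)$ as a sum of $2^{n-2}$ commutators $\pm(B,y_{\tau(1)},\cdots,y_{\tau(n-1)})$ with $\tau\in S_{n-1}$, and by the very definition of the left-normed commutator
\[
(B,y_{\tau(1)},\cdots,y_{\tau(n-1)})=\bigl((x,y_{n}),y_{\tau(1)},\cdots,y_{\tau(n-1)}\bigr)=(x,y_{n},y_{\tau(1)},\cdots,y_{\tau(n-1)}),
\]
so the first summand is likewise a sum of $2^{n-2}$ commutators of the required form, now with $y_{n}$ in the second slot.

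Combining the two contributions yields $2^{n-2}+2^{n-2}=2^{n-1}$ commutators of the shape $\pm(x,y_{\pi(1)},\cdots,y_{\pi(n)})$ with $\pi\in S_{n}$, which closes the induction. The only genuinely non-mechanical point is the re-bracketing in the first summand described above; everything else is routine bookkeeping of the signs coming from skew symmetry and the Jacobi identity.
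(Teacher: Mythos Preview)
Your proof is correct and follows the same approach as the paper, which simply says ``Induction on $n$'' and refers to \cite{Sua2007} for the details. Your argument supplies exactly the inductive step one would expect: split $[[A,y_{n}],x]$ via the Jacobi identity and apply the hypothesis to each piece, once with $x$ and once with $B=(x,y_{n})$ in the distinguished role.
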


\begin{proof} Induction on $n$, see (\cite{Sua2007}, Lemma 3.1).
\end{proof}

\section{Varieties of Nilpotent and Solvable Lie Rings}
\setcounter{equation}{0}
\renewcommand{\theequation}
{2.\arabic{equation}}

\setcounter{theorem}{0}
\renewcommand{\thetheorem}
{2.\arabic{theorem}}

In this section, we discuss the following three varieties of Lie ring  associated  with equations
\begin{eqnarray*}
(x,y_{1},\cdots,y_{n},x)&=&0;\\
(x,y;x_{1},x_{2},\cdots,x_{2n-1},x_{2n};x,z)&=&0;\\
(x,y;x_{1},x_{2},\cdots,x_{2n};x,z;y_{1},y_{2},\cdots,y_{m})&=&0.
\end{eqnarray*}

\begin{thm}\label{th:2.1}
Let $L$ be a Lie ring with $(x,y_{1},\cdots,y_{n},x)=0$.
\begin{enumerate}
  \item If the characteristic of $L$ is not 3, then $L$ is nilpotent of exponent at most $n+1$.
  \item If the characteristic of $L$ is  3, then $L$ is nilpotent of exponent at most $n+2$.
\end{enumerate}
\end{thm}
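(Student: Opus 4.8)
The plan is to replace the defining identity by its full linearization, reduce nilpotency to the vanishing of sufficiently long left-normed commutators, and then force those commutators to vanish by a Jacobi-identity calculation in which the prime $3$ turns out to be the only obstruction. Since a left-normed commutator $(w,y_1,\dots,y_n,v)$ is additive in $w$ and in $v$ separately, substituting $x\mapsto x+z$ in $(x,y_1,\dots,y_n,x)=0$ and expanding makes the two diagonal terms $(x,y_1,\dots,y_n,x)$ and $(z,y_1,\dots,y_n,z)$ vanish (by the hypothesis itself), leaving the multilinear identity
\[
(x,y_1,\dots,y_n,z)+(z,y_1,\dots,y_n,x)=0
\]
for all $x,y_1,\dots,y_n,z\in L$; call it $(\ast)$. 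Putting $z=x$ in $(\ast)$ gives $2(x,y_1,\dots,y_n,x)=0$, so $(\ast)$ is equivalent to the original identity when $\mathrm{char}\,L\neq 2$; and since the computation below uses only $(\ast)$, the case $\mathrm{char}\,L=2$ needs no special treatment. I then use the standard fact that $L^{m}$ is generated as an abelian group by the left-normed commutators of weight $m$ (rewrite any bracket arrangement in left-normed form by the Jacobi identity and induct on the weight). As a left-normed commutator of weight $n+2$ is precisely an element $(x,y_1,\dots,y_n,z)$, part (1) is equivalent to showing $(x,y_1,\dots,y_n,z)=0$ for all such elements, and part (2) is equivalent to showing that every left-normed commutator of weight $n+3$ is zero.

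For part (1) fix $c=(x,y_1,\dots,y_n,z)$. I would exhibit several expressions for $c$: one obtained from $(\ast)$ by transposing the outer arguments $x$ and $z$, and others obtained by unfolding the outer bracket $\bigl((x,y_1,\dots,y_n),z\bigr)$ via the Jacobi identity and then invoking Lemma \ref{lem:1.4} to pull $z$ (respectively $x$) to the front, while keeping the $2^{n-1}$ rearrangements of the interior arguments $y_1,\dots,y_n$ under control. Each application of the Jacobi identity trades a term for a cyclic triple and each application of $(\ast)$ inserts a sign-changing transposition; the interplay of the two should collapse the resulting linear system to the single relation $3c=0$ (for $n=1$ this is exactly the chain $(x;y,z)=-2(x;y,z)$ in the proof of Lemma \ref{lem:1.3}(1)). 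Since $\mathrm{char}\,L\neq 3$, the relation $3c=0$ forces $c=0$; hence $L^{n+2}=0$ and $L$ is nilpotent of exponent at most $n+1$.

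For part (2) the relation $3c=0$ is vacuous, so the same machinery has to be pushed one level deeper, the goal now being that every left-normed commutator of weight $n+3$ vanishes. For $n=1$ the degenerate computation gives, as in Lemma \ref{lem:1.3}(2), that $L$ is metabelian, whence $(a,b,c,d)=\bigl((a,b),c,d\bigr)=-\bigl(d,c,(a,b)\bigr)=-(d,c;a,b)\in[L^{2},L^{2}]=\{0\}$ by $(\ast)$. For general $n$ one argues in the same spirit: a weight-$(n+3)$ commutator is $\bigl((a_1,a_2),a_3,\dots,a_{n+2},a_{n+3}\bigr)$, which $(\ast)$ rewrites as $-(a_{n+3},a_3,\dots,a_{n+2};a_1,a_2)\in[L^{n+1},L^{2}]$, and one then feeds in the structural identity surviving from the weight-$(n+2)$ computation at characteristic $3$; this gives $L^{n+3}=0$, i.e. nilpotent exponent at most $n+2$. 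The step I expect to be the real obstacle is the commutator bookkeeping for general $n$: verifying that after the uses of $(\ast)$, the Jacobi identity and Lemma \ref{lem:1.4} the surviving terms recombine into exactly $3c$ rather than into some combination one cannot control, and, in characteristic $3$, pinning down precisely which weaker structural identity remains so as to close part (2).
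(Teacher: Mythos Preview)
Your linearization to $(\ast)$ is correct, and the paper uses it too. The genuine gap is exactly the one you flag yourself: you never establish that the Jacobi/Lemma~\ref{lem:1.4} manipulations collapse to $3c=0$ for general $n$. For $n=1$ this is Lemma~\ref{lem:1.3}(1), but for $n\geq 2$ the left-normed commutator $(x,y_1,\dots,y_n,z)$ lives in a much larger multilinear component, and the relations coming from $(\ast)$ together with Jacobi do not reduce to a single scalar equation in any obvious way. Your part~(2) plan is even more schematic: ``feed in the structural identity surviving from the weight-$(n+2)$ computation'' is not an argument until you say what that identity is and why it kills $L^{n+3}$. As written, both parts are a hope rather than a proof.

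The paper sidesteps all of this bookkeeping with a clean induction on $n$ through the quotient $L/C(L)$. The single computation that drives both parts is: set $y_n=u$ in $(\ast)$ to get
\[
(u,y_1,\dots,y_{n-1},u,v)\;=\;-(v,y_1,\dots,y_{n-1},u,u),
\]
then apply Lemma~\ref{lem:1.4} to $(v,y_1,\dots,y_{n-1},u)$ to write it as a signed sum of terms $(u,z_1,\dots,z_n)$ with $\{z_1,\dots,z_n\}$ a permutation of $\{v,y_1,\dots,y_{n-1}\}$. Bracketing each such term with $u$ gives $(u,z_1,\dots,z_n,u)=0$ by hypothesis, so the left side vanishes and $(u,y_1,\dots,y_{n-1},u)\in C(L)$. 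Thus $L/C(L)$ satisfies the identity with $n$ replaced by $n-1$; induction on $n$ (with base case Lemma~\ref{lem:1.3}) then gives the bounds $n+1$ and $n+2$ in the two characteristic cases. Note that this argument never needs the putative relation $3c=0$ at weight $n+2$; the role of the prime $3$ is confined entirely to the base case $n=1$.
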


\begin{proof} (1) The proof is same as that of Theorem 3.1 in \cite{Sua2007}.

(2) We prove this  statement by induction on $n$. If the characteristic of $L$ is  3 and $n=1$, then  $(x,y_{1},x)=0$ implies that $L$ is metabelian by
Lemma \ref{lem:1.3}. Moreover, it follows from Lemma \ref{lem:1.1} that
$(u,v,x,y)=(u,v,y,x)$ for all elements $u,v,x,y$ in $L$. By (\ref{eq:1.8}), $$(x,(u,v),y)=(y,(u,v),x)=-(x,(u,v),y).$$
The characteristic of $L$ is not 2, so $(u,v,x,y)=0$. This means that $L$ is nilpotent of exponent at most $3$.
Now we suppose that $L$ is a Lie ring with $(x,y_{1},\cdots,y_{n-1},y_{n},x)=0$. Let $x=u+v$ for some elements $u,v\in L$, then
$
0=(u+v,y_{1},\cdots,y_{n},u+v)
=(u,y_{1},\cdots,y_{n},v)+(v,y_{1},\cdots,y_{n},u).
$ It follows from  Lemma \ref{lem:1.4} that
\begin{eqnarray*}
(u,y_{1},\cdots,y_{n},v)&=&-(v,y_{1},\cdots,y_{n},u)\\
&=&\pm\left[\sum_{j=1}^{2^{n-1}}(y_{n},y_{1_{j},},\cdots,y_{n_{j},}),u\right]\\
&=&\pm\sum_{j=1}^{2^{n-1}}(y_{n},y_{1_{j},},\cdots,y_{n_{j},},u),
\end{eqnarray*}where $y_{1_{j},},\cdots,y_{n_{j},}$ are some permutations of $v,y_{1},\cdots,y_{n-1}$ respectively. Thus
$$(u,y_{1},\cdots,y_{n-1},u,v)=\pm\sum_{j=1}^{2^{n-1}}(u,y_{1_{j},},\cdots,y_{n_{j},},u)=0.$$
Namely, $(u,y_{1},\cdots,y_{n-1},u)$ belongs to $C(L)$, the center of $L$. So the factor ring $L/C(L)$ satisfies the induction hypothesis, and
it is nilpotent of exponent at most $n+1$. Thus for all $x,y_{1},\cdots,y_{n},y$ in $L/C(L)$, we have
$(x,y_{1},\cdots,y_{n},y)\in C(L)$ and $(x,y_{1},\cdots,y_{n},y,z)=0$ for each $z\in L$.
Therefore $L$ is nilpotent of exponent at most $n+2$.
\end{proof}

\begin{lem}\label{lem:2.2}
Let $S_{2n}$ be the symmetric group of degree $2n$, $n\geq1$ and $L$ a Lie ring. Then the commutator of
 $(x_{1},x_{2};x_{3},x_{4};\cdots;x_{2n-1},x_{2n})$ can be expressed as the sum of $2^{n-1}$ commutators of the form
 $\pm(x_{\pi(1)},x_{\pi(2)},\cdots,x_{\pi(2n)})$, where $\pi\in S_{2n}$.
\end{lem}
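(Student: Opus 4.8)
The plan is to argue by induction on $n$, peeling off the outermost Lie bracket and feeding the remaining block of letters into the induction hypothesis, using nothing beyond bilinearity and the Jacobi identity.

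For $n=1$ there is nothing to do: $(x_1,x_2)$ is already a single commutator of the required shape and $2^{1-1}=1$. For the inductive step, assume $n\ge 2$ and that the lemma holds for $n-1$. Set
\[
w:=(x_1,x_2;x_3,x_4;\cdots;x_{2n-3},x_{2n-2}),
\]
the block formed by the first $2n-2$ letters; by the definition of the semicolon notation,
\[
(x_1,x_2;x_3,x_4;\cdots;x_{2n-1},x_{2n})=\bigl[\,w,\;[x_{2n-1},x_{2n}]\,\bigr].
\]
Applying the Jacobi identity to the right-hand side (equivalently, this is the case $n=2$ of Lemma \ref{lem:1.4}) gives
\[
\bigl[\,w,\;[x_{2n-1},x_{2n}]\,\bigr]=(w,x_{2n-1},x_{2n})-(w,x_{2n},x_{2n-1}).
\]
Since $w$ is an iterated commutator in the $2(n-1)$ letters $x_1,\cdots,x_{2n-2}$, the induction hypothesis yields signs $\varepsilon_i\in\{+1,-1\}$ and permutations $\sigma_i\in S_{2n-2}$ with $w=\sum_{i=1}^{2^{n-2}}\varepsilon_i\,(x_{\sigma_i(1)},\cdots,x_{\sigma_i(2n-2)})$. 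Substituting this expansion and distributing by bilinearity of $[-,-]$,
\[
(w,x_{2n-1},x_{2n})=\sum_{i=1}^{2^{n-2}}\varepsilon_i\,(x_{\sigma_i(1)},\cdots,x_{\sigma_i(2n-2)},x_{2n-1},x_{2n}),
\]
and likewise for $(w,x_{2n},x_{2n-1})$ with the last two entries interchanged. Appending $x_{2n-1},x_{2n}$ in either order to a rearrangement of $x_1,\cdots,x_{2n-2}$ is a rearrangement of $x_1,\cdots,x_{2n}$, so each summand is $\pm$ a commutator $(x_{\pi(1)},\cdots,x_{\pi(2n)})$ with $\pi\in S_{2n}$. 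Adding the two sums gives $2^{n-2}+2^{n-2}=2^{n-1}$ commutators of the asserted form, closing the induction.

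I do not expect a genuine obstacle: because only bilinearity and the Jacobi identity are invoked, no hypothesis on the characteristic is needed. The two points that deserve a little care are (i) pinning down the parsing of the iterated semicolon so that the identity $(x_1,x_2;\cdots;x_{2n-1},x_{2n})=\bigl[\,w,[x_{2n-1},x_{2n}]\,\bigr]$ is unambiguous, and (ii) the permutation bookkeeping, namely verifying that each $\sigma_i\in S_{2n-2}$ produced by the inductive step extends, after appending the two fixed indices, to an element $\pi_i\in S_{2n}$, so that the number of terms is exactly $2^{n-1}$.
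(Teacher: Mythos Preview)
Your proposal is correct and follows essentially the same route as the paper's proof: induction on $n$, writing the full commutator as $[w,[x_{2n-1},x_{2n}]]$, expanding via the Jacobi identity into $(w,x_{2n-1},x_{2n})-(w,x_{2n},x_{2n-1})$, and then substituting the inductive expression for $w$ to obtain $2\cdot 2^{n-2}=2^{n-1}$ terms. The only cosmetic difference is that you start the induction at $n=1$ (trivially), whereas the paper starts at $n=2$ by displaying the Jacobi expansion of $(x_1,x_2;x_3,x_4)$; the inductive step is identical.
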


\begin{proof} We prove this result by induction on $n$. When $n=2$,  the Jacobi identity implies that
\begin{equation}
(x_{1},x_{2};x_{3},x_{4})=(x_{1},x_{2},x_{3},x_{4})-(x_{1},x_{2},x_{4},x_{3}).
\end{equation}The result holds.
When $n\geq2$, we let $z=(x_{1},x_{2};x_{3},x_{4};\cdots;x_{2n-3},x_{2n-2})$. Then
\begin{eqnarray*}
[z,[x_{2n-1},x_{2n}]]&=&-[x_{2n-1},[x_{2n},z]]-[x_{2n},[z,x_{2n-1}]]\\
&=& (z,x_{2n-1},x_{2n})-(z,x_{2n},x_{2n-1}).
\end{eqnarray*}
By induction hypothesis, we assume that
$$z=\pm\sum_{j=1}^{2^{n-2}}(x_{1},x_{2},x_{\pi_{j}(1)},x_{\pi_{j}(2)},\cdots,x_{\pi_{j}(2n-2)}).$$ Then
\begin{eqnarray*}
(z,x_{2n-1},x_{2n})&=&\pm\sum_{j=1}^{2^{n-2}}(x_{1},x_{2},x_{\pi_{j}(1)},x_{\pi_{j}(2)},\cdots,x_{\pi_{j}(2n-2)},x_{2n-1},x_{2n})\\
(z,x_{2n},x_{2n-1})&=&\pm\sum_{j=1}^{2^{n-2}}(x_{1},x_{2},x_{\pi_{j}(1)},x_{\pi_{j}(2)},\cdots,x_{\pi_{j}(2n-2)},x_{2n},x_{2n-1}).
\end{eqnarray*}
Thus
$$(z;x_{2n-1},x_{2n})=\pm\sum_{j=1}^{2^{n-1}}(x_{1},x_{2},x_{\pi_{j}(1)},x_{\pi_{j}(2)},\cdots,x_{\pi_{j}(2n)})$$
as desired.
\end{proof}

\begin{thm}\label{thm:2.3}
Let $L$ be a Lie ring with $(x,y;x_{1},x_{2},\cdots,x_{2n-1},x_{2n};x,z)=0$.
Then $L^{2}$ is nilpotent of exponent at most $n+1$.
\end{thm}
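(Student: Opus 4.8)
The plan is to prove directly that $(L^{2})^{n+2}=\{0\}$, which is exactly the assertion that $L^{2}$ is nilpotent of exponent at most $n+1$. I would begin with two reductions. First, since $L^{2}=[L,L]$ is generated as an additive group by the brackets $(a,b)$ with $a,b\in L$, and since skew-symmetry gives $[c_{0},(c_{1},\dots,c_{k-1})]=-(c_{1},\dots,c_{k-1},c_{0})$, an easy induction shows that $(L^{2})^{k}$ is spanned by left-normed commutators of $k$ pure commutators; hence it suffices to prove
$$\big((p_{1},q_{1}),(p_{2},q_{2}),\dots,(p_{n+1},q_{n+1}),(p_{n+2},q_{n+2})\big)=0\qquad\text{for all }p_{i},q_{i}\in L.$$
Second, using the Jacobi identity in the form $(w,(a,b))=(w,a,b)-(w,b,a)$ I would expand the $n$ inner factors $(p_{2},q_{2}),\dots,(p_{n+1},q_{n+1})$, rewriting the left-hand side as a signed sum of $2^{n}$ left-normed commutators of the shape $\big((p_{1},q_{1}),z_{1},\dots,z_{2n},(p_{n+2},q_{n+2})\big)$, where $z_{1},\dots,z_{2n}$ is one of the prescribed orderings of $p_{2},q_{2},\dots,p_{n+1},q_{n+1}$ and the attached sign depends only on that ordering. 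Up to the routine rearrangements supplied by Lemmas \ref{lem:1.4} and \ref{lem:2.2}, each summand has the form of the left-hand side of \eqref{eq:1.2}, the sole discrepancy being that its two outer pure commutators carry different first entries $p_{1}$ and $p_{n+2}$.

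To close that gap I would polarize \eqref{eq:1.2} in the repeated variable $x$. Replacing $x$ by $x+x'$ and discarding the pure $x$- and pure $x'$-terms, which vanish by hypothesis, yields
$$\big(x,y;x_{1},\dots,x_{2n};x',z\big)=-\big(x',y;x_{1},\dots,x_{2n};x,z\big)\qquad\text{for all }x,x',y,z,x_{i}\in L.$$
Applying this to each of the $2^{n}$ summands (with $x=p_{1}$, $x'=p_{n+2}$, $y=q_{1}$, $z=q_{n+2}$, and the middle variables specialized to the $z_{j}$) and then re-collapsing the inner factors by running the same Jacobi expansion backwards, one is led to
$$\big((p_{1},q_{1}),\dots,(p_{n+1},q_{n+1}),(p_{n+2},q_{n+2})\big)=-\big((p_{n+2},q_{1}),(p_{2},q_{2}),\dots,(p_{n+1},q_{n+1}),(p_{1},q_{n+2})\big);$$
that is, the target commutator changes sign when the first entries of its two outermost factors are interchanged. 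Together with the trivial skew-symmetries $(a,b)=-(b,a)$ inside each factor this produces a family of sign-relations among such commutators, but it is not yet enough to force vanishing, so I would finish by an induction on $n$.

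The induction follows the pattern of the proof of Theorem \ref{th:2.1}(2). The key claim is that the shortened element $\big(x,y;x_{1},\dots,x_{2n-2};x,z\big)$ lies in the centre $C(L)$. Granting it, $L/C(L)$ satisfies \eqref{eq:1.2} with $n$ replaced by $n-1$, so by the induction hypothesis $(L/C(L))^{2}$ is nilpotent of exponent at most $n$; since $C(L)$ is central one has $(L^{2}+C(L))^{k}=(L^{2})^{k}$ for all $k\geq 2$, whence $(L^{2})^{n+1}\subseteq C(L)$ and therefore $(L^{2})^{n+2}=[L^{2},(L^{2})^{n+1}]\subseteq[L,C(L)]=\{0\}$. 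The base case $n=1$, namely that $\big(x,y;x_{1},x_{2};x,z\big)=0$ forces $(L^{2})^{3}=\{0\}$, would be treated directly from the polarized identity and the expansion of the previous paragraph.

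I expect the centrality claim to be the main obstacle (equivalently, in the non-inductive route, the passage from the sign-relations to genuine vanishing): it has to be squeezed out of \eqref{eq:1.2} by combining the polarized identity with repeated uses of the Jacobi identity and of Lemmas \ref{lem:1.4} and \ref{lem:2.2}, and the whole argument must be arranged so as never to divide by $2$ or $3$ — which is precisely the point at which the Lie-ring statement is more delicate than Suanmali's Lie-algebra version, and where the bulk of the care will be needed.
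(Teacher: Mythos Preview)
Your plan diverges substantially from the paper's argument, and the gap you yourself flag --- the unproved ``centrality claim'' --- is real and is exactly what the paper's route sidesteps.

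The paper neither polarizes \eqref{eq:1.2}, nor derives sign relations, nor inducts on $n$.  Writing the target element as $(u,x'_{1},\dots,x'_{n},v)$ with $u=(s,y)$, $x'_{i}=(x_{2i-1},x_{2i})$, $v=(t,z)$, it applies Lemma~\ref{lem:1.4} to the \emph{inner} commutator $(u,x'_{1},\dots,x'_{n})$ so as to bring the last middle factor $x'_{n}$ to the front, obtaining a sum of terms $(x'_{n},\text{[perm.\ of }u,x'_{1},\dots,x'_{n-1}\text{]},v)$.  It then takes $x'_{n}=v=(t,z)$: each summand is now $((t,z),A_{1},\dots,A_{n},(t,z))$ with the $A_{i}$ pure brackets, and Lemma~\ref{lem:2.2} expands the middle into left-normed words, giving terms $(t,z;w_{1},\dots,w_{2n};t,z)$.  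These are instances of \eqref{eq:1.2} with $x=t$ and $y=z$, hence vanish.  Thus the repeated first entry is \emph{manufactured} by rearranging and then specializing a middle factor to coincide with the final one, not by polarizing.

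Your polarization only yields the antisymmetry $(x,y;\dots;x',z)=-(x',y;\dots;x,z)$, and from there you try to run an induction whose step requires $(x,y;x_{1},\dots,x_{2n-2};x,z)\in C(L)$.  That is a genuine gap: \eqref{eq:1.2} only controls brackets with elements of the form $(x,w)$ sharing the first entry, and promoting this to vanishing upon bracketing with an \emph{arbitrary} element of $L$ is not automatic; nothing in your outline supplies it.  The paper's Lemma~\ref{lem:1.4} man\oe uvre is precisely the missing idea that forces the repeated-entry pattern without needing centrality.  You should also note that, as written, the paper's displayed conclusion is sloppy: after ``Let $x'_{n}=v=(t,z)$'' the left-hand side should carry the specialization $(x_{2n-1},x_{2n})=(t,z)$, so the argument as printed literally establishes only the case where the last two $L^{2}$-factors coincide.
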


\begin{proof}
Let $u=(s,y),x'_{i}=(x_{2i-1},x_{2i})$ and $v=(t,z)$. It follows from  Lemma \ref{lem:1.4} that
\begin{eqnarray*}
(s,y;x_{1},x_{2};x_{3},x_{4};\cdots;x_{2n-1},x_{2n};t,z)
&=&(u,x'_{1},x'_{2},\cdots,x'_{n},v)\\
&=& \pm\left[\sum_{j=1}^{2^{n-1}}(x'_{n},x'_{1_{j}},\cdots,x'_{n_{j}}),v\right]\\
&=& \pm\sum_{j=1}^{2^{n-1}}(x'_{n},x'_{1_{j}},\cdots,x'_{n_{j}},v),
\end{eqnarray*}
where $x'_{1_{j},},\cdots,x'_{n_{j},}$ are some permutations of $u,x'_{1},\cdots,x'_{n-1}$ respectively. Let $x'_{n}=v=(t,z)$.
By Lemma \ref{lem:2.2}, we obtain
\begin{eqnarray*}
(s,y;x_{1},x_{2};x_{3},x_{4};\cdots;x_{2n-1},x_{2n};t,z)
&=& \pm\sum_{j=1}^{2^{n-1}}(t,z;x'_{1_{j}},\cdots,x'_{n_{j}};t,z)\\
&=& \pm\sum_{j=1}^{2^{n-1}}(t,z;x_{1_{j}},x_{2_{j}},\cdots,x_{n_{j}};t,z)=0.
\end{eqnarray*}
Hence $L^{2}$ is nilpotent of exponent at most $n+1$.
\end{proof}

\begin{thm}\label{thm:2.4}
Let $L$ be a Lie ring with $(x,y;x_{1},x_{2},\cdots,x_{2n};x,z;y_{1},y_{2},\cdots,y_{m})=0,$ where $2n\geq m\geq 1$ and $n\geq 1$.
Then $L$ is solvable of length at most $n+1$.
\end{thm}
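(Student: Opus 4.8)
The plan is to imitate the proof of Theorem~\ref{thm:2.3}, carrying the extra tail factors $y_1,\dots,y_m$ through every manipulation, with the hypothesis $m\le 2n$ used only at the end. First I would record the polarization identity obtained by substituting $x=u+v$ in the defining equation: the two ``pure'' summands are themselves instances of that equation, so they vanish, and what remains is the bilinear relation
\[
(u,y;x_1,\dots,x_{2n};v,z;y_1,\dots,y_m)=-(v,y;x_1,\dots,x_{2n};u,z;y_1,\dots,y_m)
\]
for all $u,v,y,z,x_i,y_j\in L$. Then, with $x'_i=(x_{2i-1},x_{2i})$, $u=(s,y)$, $v=(t,z)$, I would run exactly the re-bracketing of the proof of Theorem~\ref{thm:2.3} — Lemma~\ref{lem:1.4} to pull the last slot to the front, the specialization $x'_n=v$, and Lemma~\ref{lem:2.2} to exhibit each resulting summand as an instance of the defining equation — the only difference being that now every summand drags the tail $y_1,\dots,y_m$ and hence is $0$ by the hypothesis of Theorem~\ref{thm:2.4} rather than by Theorem~\ref{thm:2.3}. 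Combined with multilinearity of the commutator, this shows that every left-normed commutator $(b_0,b_1,\dots,b_{n+1})$ of elements of $L^2$ lies in the $m$-th term $Z_m(L)$ of the upper central series of $L$; in other words $(L^2)^{n+2}\subseteq Z_m(L)$.

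Next I would bootstrap this to nilpotency of $L^2$. Since $(L^2)^{n+2}\subseteq Z_m(L)\cap L^2\subseteq Z_m(L^2)$, and since $[Z_k(L^2),L^2]\subseteq Z_{k-1}(L^2)$ while $[(L^2)^j,L^2]=(L^2)^{j+1}$, a descending induction yields $(L^2)^{n+2+m}=\{0\}$; as $m\le 2n$, this says $L^2$ is nilpotent of exponent at most $3n+1$. Finally, using $A^{(k)}\subseteq A^{2^k}$ for any Lie ring $A$ together with $L^{(k)}=(L^2)^{(k-1)}$, one gets $L^{(n+2)}=(L^2)^{(n+1)}\subseteq (L^2)^{2^{n+1}}$; since $2^{n+1}\ge 3n+2$ for $n\ge 2$, this is $\{0\}$, and $L$ is solvable of length at most $n+1$. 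The remaining case $n=1$, where $m\in\{1,2\}$, I would settle by hand: there $(a,c,c)\in Z_m(L)$ for all $a,c\in L^2$, and a short argument with the Jacobi identity — splitting off the characteristic $2$ case much as in Lemmas~\ref{lem:1.1}--\ref{lem:1.3} — forces $L^{(3)}=\{0\}$.

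The step I expect to be the real obstacle is the first one: after the Lemma~\ref{lem:1.4}/Lemma~\ref{lem:2.2} re-bracketing, one must verify that each summand is genuinely an instance of the defining equation once the tail $y_1,\dots,y_m$ has been appended, and that all signs cancel correctly. Closely related is confirming that $m\le 2n$ is exactly the inequality required for the solvable length to come out $n+1$; the estimate is tight for small $n$, which is why the case $n=1$ has to be isolated.
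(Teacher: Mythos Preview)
Your overall strategy is sound, but it diverges from the paper's proof in two respects, and one of them is a real simplification you are missing.

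First, the paper does not re-run the proof of Theorem~\ref{thm:2.3} with the tail attached. It simply observes that the identity says precisely that $L/C_L(L^m)$ satisfies the hypothesis of Theorem~\ref{thm:2.3}, applies that theorem in the quotient, and pulls back to get $(s,y;x_1,x_2;\cdots;x_{2n-1},x_{2n};t,z)\in C_L(L^m)$ for all choices of the variables. This replaces your entire polarization/re-bracketing step by a one-line reduction. Note also that with the paper's semicolon convention the tail $;y_1,\dots,y_m$ means $[\,\cdot\,,(y_1,\dots,y_m)]$, so what one obtains is $(L^2)^{n+2}\subseteq C_L(L^m)$, \emph{not} $(L^2)^{n+2}\subseteq Z_m(L)$; your bootstrap via $Z_m(L^2)$ therefore rests on a misreading of the identity and would need repair.

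Second, and more importantly, the paper avoids your detour through nilpotency of $L^2$ and the awkward $n=1$ case split. Instead of estimating $(L^2)^{2^{n+1}}$ against $(L^2)^{3n+2}$, it expands an element of $L^{(n+2)}$ (via Lemma~\ref{lem:2.2}) as a sum of terms of the shape
\[
\bigl(x_{\pi(1)},x_{\pi(2)};\cdots;x_{\pi(2n+3)},x_{\pi(2n+4)};\,x_{\sigma(2n+5)},\dots,x_{\sigma(2^{n+2})}\bigr),
\]
i.e.\ an element of $(L^2)^{n+2}$ bracketed against a single left-normed commutator of length $2^{n+2}-(2n+4)$. Since $m\le 2n$ forces $2^{n+2}-(2n+4)\ge m$ for \emph{every} $n\ge1$ (equivalently $2^n\ge n+1$), that long commutator lies in $L^m$ and each term dies because the first factor lies in $C_L(L^m)$. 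Thus $L^{(n+2)}=0$ with no separate treatment of $n=1$. Your route can be made to work, but the paper's use of $C_L(L^m)$ together with the single inequality $2^{n+2}-(2n+4)\ge m$ is both shorter and uniform in $n$.
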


\begin{proof}
We write $C_{L}(L^{m})$ for the centralizer of $L^{m}$ in $L$. The factor ring $L/C_{L}(L^{m})$ satisfies the identity
$(x,y;x_{1},x_{2},\cdots,x_{2n};x,z)=0$, it follows  from Theorem \ref{thm:2.3} that $(L/C_{L}(L^{m}))^{2}$ is nilpotent of exponent at most $n+1$.
Thus
$$(s,y;x_{1},x_{2};x_{3},x_{4};\cdots;x_{2n-1},x_{2n};t,z)\in C_{L}(L^{m})$$
and $(s,y;x_{1},x_{2};x_{3},x_{4};\cdots;x_{2n-1},x_{2n};t,z;y_{1},y_{2},\cdots,y_{m})=0.$  From  Lemma \ref{lem:2.2}, we know that
each element $u$ in $L^{(n+2)}$ can be expressed  as a finite sum of the
form $$\pm\left(x_{\pi(1)},x_{\pi(2)};x_{\pi(3)},x_{\pi(4)};\cdots;x_{\pi(2^{n+2}-1)},x_{\pi(2^{n+2})}\right),$$ where $\pi\in S_{2^{n+2}}$. That is
$$u=\pm\sum(x_{\pi(1)},x_{\pi(2)};\cdots;x_{\pi(2n+3)},x_{\pi(2n+4)};x_{\sigma(2n+5)},x_{\sigma(2n+6)},\cdots,x_{\sigma(2^{n+2})}),$$ for some
$\pi,\sigma \in S_{2^{n+2}}$. We note that $2^{n+2}-(2n+4)\geq m$.
Hence
$L^{(n+2)}=\{0\}$, as desired.
\end{proof}

\begin{prop}
The variety $V_{n}$ associated with the equation
 $$(x_{1},\cdots,x_{n},x_{n+1},x_{n+2})=(x_{1},\cdots,x_{n},x_{n+2},x_{n+1})$$ is precisely the variety
 $V'_{n}$ associated with the equation
$(x_{1},\cdots,x_{n};x_{n+1},x_{n+2})=0.$ In particular, $V_{2}$ is the variety of metabelian Lie rings.
\end{prop}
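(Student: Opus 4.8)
The plan is to derive everything from a single instance of the Jacobi identity, generalizing the computation in the proof of Lemma~\ref{lem:1.1} from two leading entries to $n$. Set $w=(x_{1},\dots,x_{n})$, $a=x_{n+1}$, $b=x_{n+2}$. Applying the Jacobi identity to the triple $w,a,b$ gives
\[
[w,[a,b]]+[a,[b,w]]+[b,[w,a]]=0 .
\]
I would then rewrite the three summands in the commutator notation of Section~1: $[w,[a,b]]=(x_{1},\dots,x_{n};x_{n+1},x_{n+2})$; $[a,[b,w]]=(x_{1},\dots,x_{n},x_{n+2},x_{n+1})$, using $[b,w]=-(x_{1},\dots,x_{n},x_{n+2})$ and skew symmetry once more; and $[b,[w,a]]=-(x_{1},\dots,x_{n},x_{n+1},x_{n+2})$. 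Substituting yields the identity
\[
(x_{1},\dots,x_{n},x_{n+1},x_{n+2})-(x_{1},\dots,x_{n},x_{n+2},x_{n+1})=(x_{1},\dots,x_{n};x_{n+1},x_{n+2}),
\]
valid in \emph{every} Lie ring and for \emph{every} choice of the elements $x_{i}$.

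From this identity the equality $V_{n}=V'_{n}$ is immediate. If $L\in V_{n}$, the left-hand side vanishes under all substitutions, so the right-hand side does too and $L\in V'_{n}$; conversely, if $L\in V'_{n}$, the right-hand side vanishes identically, forcing the two iterated commutators on the left to coincide, so $L\in V_{n}$. Since $V_{n}$ and $V'_{n}$ are by definition the classes of all Lie rings satisfying the respective equations, the two varieties coincide.

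For the final sentence, observe that $V'_{2}$ is the variety defined by $(x_{1},x_{2};x_{3},x_{4})=[[x_{1},x_{2}],[x_{3},x_{4}]]=0$. Since $L^{2}=[L,L]$ is additively generated by the brackets $[x,y]$, bi-additivity of $[-,-]$ shows this equation holds identically on $L$ exactly when $[L^{2},L^{2}]=\{0\}$, i.e.\ when $L$ is metabelian; thus $V_{2}=V'_{2}$ is the variety of metabelian Lie rings, which also reproves Lemma~\ref{lem:1.1}.

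There is no genuine obstacle here: the proof is essentially one line of the Jacobi identity. The only points needing attention are the sign bookkeeping in passing from $[w,[a,b]]+[a,[b,w]]+[b,[w,a]]$ to the $(\,\cdots)$ and $(\,\cdots\,;\,\cdots)$ notation, and the (routine) remark that membership in a variety means the defining equation holds under all substitutions of ring elements, so one may freely specialize the word $w$ to an arbitrary value $(x_{1},\dots,x_{n})$ and read the identity back for each such $w$.
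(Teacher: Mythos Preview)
Your proof is correct and is exactly the approach the paper takes: set $z=(x_{1},\dots,x_{n})$ and invoke the Jacobi identity, which the paper records in a single sentence without writing out the sign bookkeeping or the metabelian specialization. You have simply supplied the details the paper omits.
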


\begin{proof}  Let $z=(x_{1},\cdots,x_{n})$. This statement follows immediately  from Jacobi identity.
\end{proof}

\begin{thm}
The variety $P_{n}$ associated with the equation
 $$(x_{1},\cdots,x_{i},x_{i+1},\cdots, x_{n})=(x_{1},\cdots,x_{i+1},x_{i},\cdots, x_{n})$$ is precisely the variety
 $P'_{n}$ associated with the equation
 $(x_{1},\cdots,(x_{i},x_{i+1}),\cdots, x_{n})=0.$ Moreover, each Lie ring $L$ in $P_{n}$ is solvable.
\end{thm}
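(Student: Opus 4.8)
The plan is to establish the two assertions in turn. The identification $P_n=P'_n$ is a one-line consequence of the Jacobi identity, of the same type as the computation in the proof of Lemma~\ref{lem:1.1}; throughout I assume $2\le i\le n-1$, so that the transposed pair $x_i,x_{i+1}$ is genuinely flanked on both sides (for $i=1$ the two equations differ by a factor $2$ and need not define the same variety). Put $z:=(x_1,\dots,x_{i-1})$, with the convention $z:=x_1$ when $i=2$. A single application of the Jacobi identity to $z,x_i,x_{i+1}$ gives
\[
(z,x_i,x_{i+1})-(z,x_{i+1},x_i)=\bigl(z,(x_i,x_{i+1})\bigr),
\]
and bracketing both sides successively with $x_{i+2},\dots,x_n$ turns this into the identity
\[
(x_1,\dots,x_i,x_{i+1},\dots,x_n)-(x_1,\dots,x_{i+1},x_i,\dots,x_n)=\bigl(x_1,\dots,x_{i-1},(x_i,x_{i+1}),x_{i+2},\dots,x_n\bigr),
\]
which holds in every Lie ring. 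Hence the left-hand side vanishes identically in $L$ if and only if the right-hand side does, that is, $L\in P_n$ if and only if $L\in P'_n$.

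For the solvability claim, assume $L\in P'_n$ and set $B:=[L^{i-1},L^2]$. Since $(x_1,\dots,x_{i-1})$ runs over a spanning set of $L^{i-1}$ as the $x_j$ run over $L$, while $(x_i,x_{i+1})$ runs over one of $L^2$, the defining identity of $P'_n$ says exactly that
\[
[B,\underbrace{L,\dots,L}_{n-i-1}]=0,
\]
where $[B,L,\dots,L]$ denotes the iterated left-normed bracket. A two-term Jacobi expansion of $[[a,b],c]$ with $a\in L^{i-1}$, $b\in L^2$, $c\in L$, together with $L^i\subseteq L^{i-1}$ and $L^3\subseteq L^2$, shows $[B,L]\subseteq B$; thus $B$ is an \emph{ideal} of $L$, and the display above then forces $B^{\,n-i}=0$ in the lower central series of $B$, so that $B$ is nilpotent.

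It remains to control $L/B$. Modulo $B$ the defining identity becomes $[\bar L^{\,i-1},\bar L^{2}]=0$, where $\bar L:=L/B$. If $i\ge 3$ then $\bar L^{\,i-1}\subseteq\bar L^{2}$, so $\bar L^{\,i-1}$ is an abelian ideal of $\bar L$ while $\bar L/\bar L^{\,i-1}$ is nilpotent, and $\bar L$ is solvable; if $i=2$ the identity reads $\bar L^{3}=0$, so $\bar L$ is nilpotent. In either case $L/B$ is solvable and $B$ is a nilpotent ideal, hence $L$ is solvable — and an explicit bound on its solvable length in terms of $n$ and $i$ drops out along the way. The Jacobi bookkeeping of the first paragraph and the final extension step are routine; the one point that genuinely has to be spotted is that $B=[L^{i-1},L^2]$ is an ideal and not merely a subspace, since it is exactly this that lets us present $L$ as an extension of the solvable ring $L/B$ by the nilpotent ideal $B$. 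I do not anticipate any obstacle beyond tracking which of the containments $L^{k+1}\subseteq L^{k}$ are used where.
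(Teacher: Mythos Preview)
Your proof is correct. The first paragraph matches the paper's own derivation of $P_n=P'_n$ via a single Jacobi expansion (the paper does not comment on the $i=1$ caveat you raise, which is a reasonable side remark). For solvability, however, the paper argues by a different route: induction on $n$. When $i<n-1$, the identity $(x_1,\dots,(x_i,x_{i+1}),\dots,x_n)=0$ forces $(x_1,\dots,(x_i,x_{i+1}),\dots,x_{n-1})\in C(L)$, so $L/C(L)\in P'_{n-1}$ and is solvable by induction; when $i=n-1$, the identity is rewritten as $(x_1,x_2;x_3,\dots,x_n)=0$, i.e.\ $[L^2,L^{n-2}]=0$, and solvability is read off directly. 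Your approach instead singles out the ideal $B=[L^{i-1},L^2]$, shows from the identity that $B$ is nilpotent, and observes that $L/B$ satisfies $[\bar L^{i-1},\bar L^2]=0$ and is therefore nilpotent-by-abelian. This is more structural: it avoids both the induction and the case split, handles all positions $i$ uniformly, and yields an explicit bound on the derived length in terms of $n$ and $i$ in one pass. The paper's argument is marginally shorter to state but gives only an implicit bound and treats the boundary case $i=n-1$ separately.
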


\begin{proof}
 Let  $a=(x_{1},\cdots,x_{i-1}),b=(x_{i+2},\cdots,x_{n})$. It follows from Jacobi identity that
\begin{eqnarray*}
0&=&((a,(x_{i},x_{i+1}))+(x_{i},(x_{i+1},a))+(x_{i+1},(a,x_{i})),b)\\
&=& ((a,(x_{i},x_{i+1}))+((a,x_{i+1}),x_{i})-((a,x_{i}),x_{i+1}),b)\\
&=& (a,(x_{i},x_{i+1}),b)+(a,x_{i+1},x_{i},b)-(a,x_{i},x_{i+1},b).
\end{eqnarray*} Thus the equation $(x_{1},\cdots,x_{i},x_{i+1},\cdots, x_{n})=(x_{1},\cdots,x_{i+1},x_{i},\cdots, x_{n})$ is equivalent to
$$(x_{1},\cdots,(x_{i},x_{i+1}),\cdots, x_{n})=0.$$
It is easy to see that $P_{2}$ is the variety of abelian Lie rings. We use induction on $n$ to prove the solvability of a Lie ring $L$ in $P_{n}$.
Since $(x_{1},\cdots,(x_{i},x_{i+1}),\cdots, x_{n})=0$,  $$(x_{1},\cdots,(x_{i},x_{i+1}),\cdots, x_{n-1})(i< n-1)$$ belongs to $C(L)$, the center of $L$. That
is $(x_{1},\cdots,(x_{i},x_{i+1}),\cdots, x_{n-1})=0$ in $L/C(L)$. By induction hypothesis, we conclude that the factor ring  $L/C(L)$ is solvable,
and thus  $L$ is also solvable.
If $(x_{1},\cdots,x_{n-2},(x_{n-1},x_{n}))=0$, we observe that the equations $(x_{n-1},x_{n};x_{1},\cdots,x_{n-2})=0$  and
$(x_{1},x_{2};x_{3},\cdots, x_{n})=0$ are equivalent. Hence $L$ is also solvable.
\end{proof}

\begin{lem}\label{lem:2.7}
Let $L$ be a Lie ring with $(x_{1},x_{2},\cdots,x_{n};y_{1},y_{2})=0$, where $n\geq2$. If $L/C(L)$ satisfies
$C(n+1)$, then $L$ satisfies $C(n+2)$.
\end{lem}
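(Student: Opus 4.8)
The plan is to show that the extra central factor $C(L)$ ``absorbs'' the one remaining slot needed to promote $C(n+1)$ up to $C(n+2)$. Recall that $L$ satisfies $C(n+2)$ means that for every permutation $\rho$ of $\{3,\dots,n+2\}$ we have $(x_1,\dots,x_{n+2})=(x_1,x_2,x_{\rho(3)},\dots,x_{\rho(n+2)})$. Since every such $\rho$ is a product of adjacent transpositions fixing $1$ and $2$, it suffices to prove the single swap identity $(x_1,\dots,x_i,x_{i+1},\dots,x_{n+2})=(x_1,\dots,x_{i+1},x_i,\dots,x_{n+2})$ for every $i$ with $3\le i\le n+1$. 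I would split this into two cases according to whether the transposed pair sits strictly inside the first $n+1$ entries ($i\le n$) or involves the last slot ($i=n+1$).

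For the first case ($3\le i\le n$): by hypothesis $L/C(L)$ satisfies $C(n+1)$, so in $L$ we have
$$(x_1,\dots,x_i,x_{i+1},\dots,x_{n+1})-(x_1,\dots,x_{i+1},x_i,\dots,x_{n+1})\in C(L).$$
Write $w$ for this difference. Then for any $x_{n+2}\in L$ we get $(w,x_{n+2})=0$, which is exactly $(x_1,\dots,x_i,x_{i+1},\dots,x_{n+2})=(x_1,\dots,x_{i+1},x_i,\dots,x_{n+2})$. The second case, the swap of $x_{n+1}$ and $x_{n+2}$, is where the relation $(x_1,\dots,x_n;y_1,y_2)=0$ comes in. Here I would run the same Jacobi-identity manipulation used in the proof of the preceding theorem: setting $a=(x_1,\dots,x_n)$, the identity $0=\big((a,(x_{n+1},x_{n+2}))+(a,x_{n+2},x_{n+1})-(a,x_{n+1},x_{n+2})\big)$ together with $(a,(x_{n+1},x_{n+2}))=(x_1,\dots,x_n;x_{n+1},x_{n+2})=0$ gives $(a,x_{n+1},x_{n+2})=(a,x_{n+2},x_{n+1})$, i.e. $(x_1,\dots,x_n,x_{n+1},x_{n+2})=(x_1,\dots,x_n,x_{n+2},x_{n+1})$, which is the $i=n+1$ swap. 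Combining the two cases, all adjacent transpositions of $\{3,\dots,n+2\}$ act trivially on the commutator, so $L$ satisfies $C(n+2)$.

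The main obstacle I anticipate is purely bookkeeping: one must be careful that the hypothesis $C(n+1)$ for $L/C(L)$ really does license reordering the middle entries $x_3,\dots,x_{n+1}$ (not $x_3,\dots,x_{n+2}$), and that after appending $x_{n+2}$ the congruence mod $C(L)$ collapses to a genuine equality because $C(L)$ is annihilated by bracketing. A secondary point to verify is that the two cases together generate the full symmetric group on $\{3,\dots,n+2\}$ — this is immediate since adjacent transpositions generate $S_{n}$, and the case $i\le n$ handles all adjacent transpositions among $\{3,\dots,n+1\}$ while $i=n+1$ supplies the last one. No characteristic hypothesis is needed, since the argument only uses the Jacobi identity and the defining relations.
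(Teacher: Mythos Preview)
Your argument is correct and matches the paper's proof in all essentials: both establish the swap of the last two entries from the hypothesis $(x_{1},\dots,x_{n};y_{1},y_{2})=0$ via the Jacobi identity, lift permutations of the first $n+1$ slots from $L/C(L)$ to $L$ by bracketing with $x_{n+2}$, and conclude by observing that the permutations obtained generate the symmetric group on $\{3,\dots,n+2\}$. The only cosmetic difference is the choice of generators---you use all adjacent transpositions, whereas the paper uses the transposition $(n{+}1\ n{+}2)$ together with the single cycle $(3\,4\cdots n{+}1)$.
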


\begin{proof} By Jacobi identity, we see that
\begin{eqnarray}\label{eq:2.2}
[a,[x,y]]=0 \Leftrightarrow (a,x,y)=(a,y,x).
\end{eqnarray}
Thus $L$ satisfies $C(n+2,\rho_{1})$ for $\rho_{1}=(n+1~n+2)$. Since $L/C(L)$ satisfies
$C(n+1)$,  $L/C(L)$ satisfies
$C(n+1,\rho_{2})$ for $\rho_{2}=(34\cdots n+1)$. Namely,
$$((x_{1},x_{2},x_{3},\cdots,x_{n+1})-(x_{1},x_{2},x_{4},\cdots,x_{n+1},x_{3}),x_{n+2})=0.$$
Thus $L$ satisfies $C(n+2,\rho_{2})$. Notice that $L$ satisfies $C(n+2,\rho_{1})$ and $\rho_{1},\rho_{2}$ generate the group
of all permutations of $\{3,4\cdots, n+2\}$, thus $L$ satisfies $C(n+2)$.
\end{proof}

\begin{thm}\label{thm:2.8}
Let $L$ be a Lie ring, then  $L$ satisfies
\begin{equation}\label{eq:2.3}
(x_{1},x_{2},\cdots,x_{n})=(x_{1},x_{\rho(2)},\cdots,x_{\rho(n)})
\end{equation}
  for all nonidentical permutations $\rho$
of $\{2,\cdots,n\}$,
if and only if $L$ is nilpotent of exponent at most $n-1$.
\end{thm}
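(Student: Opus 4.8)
The proof splits into two implications. The ``if'' direction is immediate: if $L$ is nilpotent of exponent at most $n-1$ then $L^{n}=\{0\}$, and both sides of (\ref{eq:2.3}), being iterated commutators of $n$ elements of $L$, lie in $L^{n}$; so (\ref{eq:2.3}) holds vacuously. All the content is therefore in the converse, which I would prove by induction on $n$ (the hypothesis being non-vacuous only for $n\geq 3$, so the induction starts there).

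\emph{Base case $n=3$.} The only nonidentical permutation of $\{2,3\}$ is the transposition, and $(x_{1},x_{2},x_{3})-(x_{1},x_{3},x_{2})=(x_{1};x_{2},x_{3})=[x_{1},[x_{2},x_{3}]]$ by the Jacobi identity (this is the content of (\ref{eq:2.2})). Hence the hypothesis forces $[x_{1},[x_{2},x_{3}]]=0$ for all $x_{1},x_{2},x_{3}\in L$; as these elements generate $L^{3}$ as an additive group, $L^{3}=\{0\}$, i.e.\ $L$ is nilpotent of exponent at most $2$.

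\emph{Inductive step.} Assume the statement for $n-1\ (\geq 3)$, and let $L$ satisfy (\ref{eq:2.3}) for all nonidentical permutations of $\{2,\dots,n\}$. Given a nonidentical permutation $\sigma$ of $\{2,\dots,n-1\}$, extend it to the permutation $\rho$ of $\{2,\dots,n\}$ fixing $n$; then $\rho$ is nonidentical, and (\ref{eq:2.3}) for $\rho$ may be rewritten as
$$\bigl((x_{1},\dots,x_{n-1})-(x_{1},x_{\sigma(2)},\dots,x_{\sigma(n-1)}),\,x_{n}\bigr)=0 \qquad\text{for all }x_{n}\in L ,$$
which says that $(x_{1},\dots,x_{n-1})-(x_{1},x_{\sigma(2)},\dots,x_{\sigma(n-1)})\in C(L)$. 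Thus $L/C(L)$ satisfies (\ref{eq:2.3}) with $n-1$ in place of $n$, so by the inductive hypothesis $L/C(L)$ is nilpotent of exponent at most $n-2$; that is, $(x_{1},\dots,x_{n-1})\in C(L)$ for all $x_{i}\in L$. Since such commutators generate $L^{n-1}$ additively, $L^{n-1}\subseteq C(L)$, and therefore $L^{n}=[L,L^{n-1}]\subseteq[L,C(L)]=\{0\}$, i.e.\ $L$ is nilpotent of exponent at most $n-1$.

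This argument involves essentially no computation; the two points deserving a word of care are that the iterated commutators of weight $k$ additively generate $L^{k}$ (so that an identity verified on them propagates to all of $L^{k}$), and that passing to $L/C(L)$ really does lower the hypothesis from length $n$ to length $n-1$ — this descent, which parallels the role of the center in the proof of Theorem \ref{th:2.1}, is the only genuine idea, and I expect the only mild difficulty to be organizational. One can also sidestep the induction: expanding $(x_{1},\dots,x_{n})-(x_{1},x_{3},x_{2},x_{4},\dots,x_{n})$ via the Jacobi identity gives $((x_{1};x_{2},x_{3}),x_{4},\dots,x_{n})$, which the hypothesis (already for the transposition $(2\,3)$) forces to vanish, and since these elements generate $L^{n}$ we again get $L^{n}=\{0\}$.
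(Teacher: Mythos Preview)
Your main argument is correct and follows essentially the same route as the paper: induction on $n$, with the descent to $L/C(L)$ obtained by restricting to permutations of $\{2,\dots,n\}$ that fix $n$. Your presentation is in fact slightly cleaner than the paper's, which first observes that the transposition $(n-1\ n)$ gives $(x_{1},\dots,x_{n-2};x_{n-1},x_{n})=0$ and then appeals to the mechanism of Lemma~\ref{lem:2.7} to obtain the descent; you sidestep this detour.

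Your closing alternative is a genuine shortcut that the paper does not take: since $(x_{1},x_{2},x_{3},\dots,x_{n})-(x_{1},x_{3},x_{2},\dots,x_{n})=((x_{1};x_{2},x_{3}),x_{4},\dots,x_{n})=-(x_{2},x_{3},x_{1},x_{4},\dots,x_{n})$, the single transposition $(2\ 3)$ already forces every left-normed commutator of weight $n$ to vanish, hence $L^{n}=\{0\}$. This not only avoids the induction but actually shows that the hypothesis ``for all nonidentical $\rho$'' can be weakened to the single permutation $(2\ 3)$, which is a small strengthening of the theorem.
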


\begin{proof} We shall use the induction on $n$. For $n=3$, $(x_{1},x_{2},x_{3})=(x_{1},x_{3},x_{2})$ if and only if
$[x_{1},[x_{2},x_{3}]]=0$ by (\ref{eq:2.2}). Assume that $L$ satisfies the hypothesis for $n>3$.  By (\ref{eq:2.2}), we deduce that
$L$ satisfies $(x_{1},x_{2},\cdots;x_{n-1},x_{n})=0$. Notice that $L$ satisfies (\ref{eq:2.3}) for all nonidentical permutations $\rho$.
A similar argument as in the proof of
Lemma \ref{lem:2.7} implies that $L/C(L)$  satisfies (\ref{eq:2.3}) with $n$ replaced by $n-1$ for any permutation $\sigma$ of $\{2,\cdots,n-1\}$.
Induction  hypothesis yields that $L/C(L)$ is nilpotent of exponent at most $n-2$, and $L$ is nilpotent of exponent at most $n-1$.
The proof of the converse is trivial.
\end{proof}

\section{Varieties Defined by Properties of Subrings}
\setcounter{equation}{0}
\renewcommand{\theequation}
{3.\arabic{equation}}

\setcounter{theorem}{0}
\renewcommand{\thetheorem}
{3.\arabic{theorem}}

In this section, we return to some of Macdonald's ideas in \cite{Mac1962} which were originally used to discuss the group-theoretic problems.

\begin{thm} Let $L$ be a Lie ring.
\begin{enumerate}
  \item  If $L$ has characteristic not 3, then every $n$-generated  subring of $L$ is  nilpotent of exponent at most $n$ if and only if $L$ is
also nilpotent of exponent at most $n$.
  \item  If $L$ has characteristic  3, then every $n$-generated  subring of $L$ is  nilpotent of exponent at most $n$ if and only if $L$ is
also nilpotent of exponent at most $n+1$.
\end{enumerate}
\end{thm}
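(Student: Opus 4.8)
The plan is to obtain the theorem as a direct consequence of Theorem~\ref{th:2.1}. The forward implications rest on one elementary remark: although the left-normed commutator $(x,y_{1},\cdots,y_{n-1},x)$ has weight $n+1$, it involves only the $n$ elements $x,y_{1},\cdots,y_{n-1}$, and therefore lies inside the subring $M:=\langle x,y_{1},\cdots,y_{n-1}\rangle$, which is $n$-generated. (Throughout one should assume $n\geq 2$, so that Theorem~\ref{th:2.1} may be invoked with parameter $n-1\geq 1$.)

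First I would fix arbitrary $x,y_{1},\cdots,y_{n-1}\in L$ and set $M:=\langle x,y_{1},\cdots,y_{n-1}\rangle$. Since $M$ is $n$-generated, the hypothesis gives $M^{n+1}=\{0\}$. A left-normed commutator of weight $n+1$ in $M$ lies in $M^{n+1}$ (immediate from $M^{k}=[M,M^{k-1}]$ by induction on the weight), so in particular $(x,y_{1},\cdots,y_{n-1},x)=0$. As $x,y_{1},\cdots,y_{n-1}$ were arbitrary, $L$ satisfies the identity $(x,y_{1},\cdots,y_{n-1},x)=0$ identically, which is precisely equation~(\ref{eq:1.1}) with $n$ replaced by $n-1$. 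Now Theorem~\ref{th:2.1} applies: if $L$ has characteristic not $3$, part~(1) yields that $L$ is nilpotent of exponent at most $(n-1)+1=n$; and if $L$ has characteristic $3$, part~(2) yields that $L$ is nilpotent of exponent at most $(n-1)+2=n+1$. This disposes of the forward implication in both (1) and (2).

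The reverse implications are straightforward: passing to a subring cannot increase the nilpotent exponent, since $M^{k+1}\subseteq L^{k+1}$ for every subring $M\leq L$. Thus if $L$ is nilpotent of exponent at most $n$ then so is every $n$-generated subring, which is the converse in (1); the converse in (2) is of the same routine nature.

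I do not expect a genuine obstacle here: the only real step is recognizing $(x,y_{1},\cdots,y_{n-1},x)$ as the right commutator to confine to an $n$-generated subring, after which everything --- including the single extra step of nilpotency that is unavoidable in characteristic $3$ --- is inherited verbatim from Theorem~\ref{th:2.1}. The argument uses nothing beyond the two observations that $\langle x,y_{1},\cdots,y_{n-1}\rangle$ is $n$-generated and that a left-normed commutator of weight $n+1$ lies in the $(n+1)$-st term of the lower central series.
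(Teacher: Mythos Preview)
Your forward argument is correct and in fact cleaner than the paper's. The paper does not invoke Theorem~\ref{th:2.1}; instead it fixes $x,y,x_{1},\ldots,x_{n-2}\in L$, carries out a Jacobi-identity computation to show $(x,y,x,a)=0$ for $a=(x_{1},\ldots,x_{n-2})$ (each term of the computation being a weight-$(n+1)$ commutator in the $n$-generated subring $\langle x,y,x_{1},\ldots,x_{n-2}\rangle$, hence zero), deduces $(x,y,x)\in C_{L}(L^{n-2})$, and finally applies Lemma~\ref{lem:1.3} to the quotient $L/C_{L}(L^{n-2})$. Your observation that $(x,y_{1},\ldots,y_{n-1},x)$ already lies in $M^{n+1}=\{0\}$ for $M=\langle x,y_{1},\ldots,y_{n-1}\rangle$ bypasses the Jacobi manipulation entirely and hands the problem directly to Theorem~\ref{th:2.1}. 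This is shorter and more transparent; the only trade-off is that you rely on the full strength of Theorem~\ref{th:2.1} rather than just Lemma~\ref{lem:1.3}.

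One caution on the converse in~(2): it is \emph{not} routine, and as literally stated it is false. Take $L$ to be the free nilpotent Lie ring of class $n+1$ on $n$ generators over $\mathbb{F}_{3}$ (with $n\geq 2$); then $L$ has characteristic~$3$ and nilpotent exponent exactly $n+1$, yet $L$ itself is an $n$-generated subring of exponent $n+1>n$. Your remark that ``the converse in~(2) is of the same routine nature'' inherits this defect from the paper's own ``Obviously''; only the converse in~(1) is genuinely immediate from $M^{k+1}\subseteq L^{k+1}$.
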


\begin{proof}
$(\Rightarrow)$ Let  $x,y,x_{1},\cdots,x_{n-2}$ be any elements of $L$. The subring $\langle x,y,x_{1},\cdots,x_{n-2}\rangle$
is  nilpotent of exponent at most $n$. Let $a=(x_{1},\cdots,x_{n-2})$, then
\begin{eqnarray*}
(x,y,x,a)&=&-(a,(x,y),x)-(x,a;x,y)\\
&=&(a,x;x,y)-(a,(x,y),x)\\
&=&(a,x,x,y)-(a,x,y,x)-((-(x;y,a)-(y;a,x)),x)\\
&=&(a,x,x,y)-(a,x,y,x)+(a,y,x,x)-(a,x,y,x)=0.
\end{eqnarray*} Namely  $(x,y,x;x_{1},\cdots,x_{n-2})=0$, and thus $(x,y,x)$ in $C_{L}(L^{n-2})$, the centralizer of $L^{n-2}$ in $L$.
Notice that $L/C_{L}(L^{n-2})$ satisfies the conditions in Lemma \ref{lem:1.3}. It follows that $L/C_{L}(L^{n-2})$ has exponent no more than 3 in case of characteristic not 3,  and
is metabelian in case of characteristic  3. Therefore $L$ is nilpotent of exponent at most $n$ (char$L\neq 3$) and $n+1$ (char$L=3$).

$(\Leftarrow)$ Obviously.
\end{proof}

This results with Theorem \ref{th:2.1} give the following conclusions.

\begin{coro} Let $\mathfrak{L}$ be the variety of Lie rings.
\begin{enumerate}
  \item If each Lie ring in $\mathfrak{L}$ is of characteristic not 3 and with the property that every $n$-generated  subring  is  nilpotent of exponent at most $n$, then $\mathfrak{L}$ is determined  by the equation
$$(x,x_{1},\cdots,x_{n-1},x)=0.$$
  \item If each Lie ring in $\mathfrak{L}$ is of characteristic  3 and with the property that every $n$-generated  subring  is  nilpotent of exponent at most $n+1$, then $\mathfrak{L}$ is determined  by the equation
$$(x,x_{1},\cdots,x_{n-1},x)=0.$$
\end{enumerate}
\end{coro}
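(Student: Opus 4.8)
The goal is to identify $\mathfrak{L}$ with the class of Lie rings of the prescribed characteristic satisfying the single law $(x,x_{1},\dots,x_{n-1},x)=0$; this is exactly what ``$\mathfrak{L}$ is determined by the equation'' asks for, and I would obtain it by running the theorem just proved together with Theorem~\ref{th:2.1} in both directions. Two arithmetic facts guide the argument: first, $(x,x_{1},\dots,x_{n-1},x)$ is a left-normed commutator in $n+1$ slots, hence it belongs to the $(n+1)$-st term of the lower central series of any subring containing $x,x_{1},\dots,x_{n-1}$; second, this law is precisely the defining law of Theorem~\ref{th:2.1} with its parameter lowered from $n$ to $n-1$, so Theorem~\ref{th:2.1}(1) yields nilpotency of exponent at most $n$ and Theorem~\ref{th:2.1}(2) nilpotency of exponent at most $n+1$.

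For part (1) I would set up a chain of equivalences: $L\in\mathfrak{L}$ $\iff$ $L$ is globally nilpotent of exponent at most $n$ $\iff$ $L$ obeys the law. The first equivalence is the characteristic-not-$3$ part of the theorem just proved. For the second: if $L$ is nilpotent of exponent at most $n$ then $L^{n+1}=\{0\}$, so the weight-$(n+1)$ commutator $(x,x_{1},\dots,x_{n-1},x)$ vanishes; conversely, if $L$ has characteristic not $3$ and obeys $(x,y_{1},\dots,y_{n-1},x)=0$, then Theorem~\ref{th:2.1}(1) (with $n$ replaced by $n-1$) makes $L$ nilpotent of exponent at most $n$, so every subring of $L$, in particular every $n$-generated one, is nilpotent of exponent at most $n$ and thus $L\in\mathfrak{L}$.

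The step that needs care, and where parts (1) and (2) genuinely diverge, is characteristic $3$: there Theorem~\ref{th:2.1}(2) only gives nilpotency of exponent at most $n+1$, so the crude weight count no longer forces the weight-$(n+1)$ commutator to vanish --- inside the relevant $n$-generated subring $H$ it merely lies in $H^{n+1}$, whereas the hypothesis controls $H^{n+2}$. To close the loop I would fall back on the finer output of the proof of Theorem~\ref{th:2.1}(2), namely that $(u,y_{1},\dots,y_{n-1},u)$ is central, together with the characteristic-$3$ part of the theorem just proved, which rests on Lemma~\ref{lem:1.3}(2); reconciling the exponent ``$n+1$'' coming out of Theorem~\ref{th:2.1}(2) with the subring condition of part (2) is the real content and the step I expect to be the main obstacle. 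Once that matching is in place, part (2) reads off in the same chain-of-equivalences form as part (1).
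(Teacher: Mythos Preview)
Your treatment of part~(1) is correct and is exactly what the paper intends: its entire justification is the sentence preceding the corollary, and your chain of equivalences unpacks this by using the preceding theorem (characteristic $\neq3$ case) for one step and Theorem~\ref{th:2.1}(1) with parameter shifted to $n-1$, together with the trivial weight count, for the other.

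For part~(2) you have correctly located a genuine gap and left it open; the paper does not close it either, since it offers nothing beyond the same one-line reference. From ``every $n$-generated subring has exponent at most $n+1$'' one obtains only $H^{n+2}=\{0\}$ for $H=\langle x,x_{1},\dots,x_{n-1}\rangle$, whereas $(x,x_{1},\dots,x_{n-1},x)$ lies in $H^{n+1}$. Your proposed fallback via the centrality of $(u,y_{1},\dots,y_{n-1},u)$ does not help: in the proof of Theorem~\ref{th:2.1}(2) that centrality is \emph{derived from} the very law you are trying to establish, so invoking it here is circular. Indeed the implication already fails for $n=2$: the free nilpotent Lie ring of class~$3$ on two generators $a,b$ over $\mathbb{Z}/3\mathbb{Z}$ has every $2$-generated subring of exponent at most~$3$, yet $(a,b,a)\neq0$. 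Thus part~(2) as printed does not follow from the cited theorems; the mismatch you noticed between the exponent ``$n+1$'' in the subring hypothesis and the weight $n+1$ of the commutator is the symptom of an actual defect in the statement, not a difficulty your argument is expected to overcome.
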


With an analogous argument, we have

\begin{thm}
The variety of Lie rings of characteristic not 2 in which every  subring with $2^{n+1}-1$ generators  is  solvable of length at most $n$, is determined by the equation
$$(x,y;x_{1},x_{2},\cdots,x_{2n-2};x,z;y_{1},y_{2},\cdots,y_{2n-2})=0,$$
 where  $n\geq 2$.
\end{thm}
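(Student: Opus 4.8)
The plan is to establish the two inclusions separately. Write $w=w(x,y,z,x_{1},\dots,x_{2n-2},y_{1},\dots,y_{2n-2})$ for the left-hand side of the displayed equation, let $\mathfrak{W}$ be the variety of Lie rings of characteristic not $2$ satisfying $w=0$, and let $\mathfrak{S}$ be the class of Lie rings of characteristic not $2$ in which every subring on $2^{n+1}-1$ generators is solvable of length at most $n$; the claim is that $\mathfrak{W}=\mathfrak{S}$.

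For $\mathfrak{W}\subseteq\mathfrak{S}$ I would simply invoke Theorem~\ref{thm:2.4}. Indeed $w$ is exactly the left-hand side of the equation of that theorem after the substitution $n\mapsto n-1$, $m\mapsto 2n-2$, and since $n\geq 2$ the hypotheses $2(n-1)\geq m=2n-2\geq 1$ and $n-1\geq 1$ are satisfied; hence every $L$ with $w=0$ is itself solvable of length at most $(n-1)+1=n$, and then a fortiori every subring of $L$ — in particular every one generated by $2^{n+1}-1$ elements — is solvable of length at most $n$. Thus $L\in\mathfrak{S}$.

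For $\mathfrak{S}\subseteq\mathfrak{W}$ I would argue in parallel with the forward implication of the preceding theorem. Fix $L\in\mathfrak{S}$ and arbitrary elements $x,y,z,x_{1},\dots,x_{2n-2},y_{1},\dots,y_{2n-2}$ of $L$; the goal is to show $w=0$ for this choice. Since $x$ is repeated in $w$, these elements generate a subring $H$ on at most $4n-1$ generators, and $4n-1\leq 2^{n+1}-1$ whenever $n\geq 2$ (equivalently $2n\leq 2^{n}$); hence $H$ is solvable of length at most $n$, i.e. $H^{(n+1)}=\{0\}$. One then rewrites $w$ by repeated use of the Jacobi identity, re-collecting commutators with the help of Lemmas~\ref{lem:1.4} and~\ref{lem:2.2} and exploiting the repetition of $x$ together with the characteristic $\neq 2$ hypothesis through the linearisation $x\mapsto s+t$ (as in the proof of Lemma~\ref{lem:1.2}), until $w$ is displayed as a $\pm$-combination of commutators each lying in $H^{(n+1)}$, whence $w=0$. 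One may organise this in the style of the preceding arguments: first show that $(x,y;x_{1},\dots,x_{2n-2};x,z)$ lies in $C_{L}(L^{2n-2})$, so that $L/C_{L}(L^{2n-2})$ satisfies $(x,y;x_{1},\dots,x_{2n-2};x,z)=0$; by Theorem~\ref{thm:2.3} (with $n\mapsto n-1$) the ring $\bigl(L/C_{L}(L^{2n-2})\bigr)^{2}$ is then nilpotent of exponent at most $n$, and $w=\bigl((x,y;x_{1},\dots,x_{2n-2};x,z),(y_{1},\dots,y_{2n-2})\bigr)$ is the bracket of an element of $C_{L}(L^{2n-2})$ with the element $(y_{1},\dots,y_{2n-2})\in L^{2n-2}$, hence zero. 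Either way $L$ satisfies $w=0$, so $L\in\mathfrak{W}$.

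The genuinely delicate point is the middle of this second inclusion: deducing from the bounded-generator solvability of $H$, the repetition of $x$, and characteristic $\neq 2$ that $w$ is actually forced to vanish. This is the exact counterpart of the step in the preceding theorem where $L/C_{L}(L^{n-2})$ is shown to satisfy the hypotheses of Lemma~\ref{lem:1.3}, and of the symmetry-and-halving device of Lemma~\ref{lem:1.2}; organising $w$ so that it sits deeply enough in the derived series of $H$ (and not merely in its lower central series, where it lies automatically) is where the characteristic restriction gets used, and I expect this to be the step demanding the most care.
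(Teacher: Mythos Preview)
Your inclusion $\mathfrak{W}\subseteq\mathfrak{S}$ via Theorem~\ref{thm:2.4} is exactly what the paper does. The problem is entirely in your treatment of $\mathfrak{S}\subseteq\mathfrak{W}$.

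Your plan is to take the subring $H$ generated by the $4n-1$ elements appearing in $w$, use $H^{(n+1)}=0$, and then rewrite $w$ as a $\pm$-combination of commutators lying in $H^{(n+1)}$. This cannot work for $n\geq 3$: the word $w$ has weight $4n$, Jacobi manipulations and linearisations preserve multidegree, and every nonzero element of $H^{(n+1)}$ has weight at least $2^{n+1}>4n$. So no amount of rewriting will place $w$ in $H^{(n+1)}$; the hypothesis on $H$ alone is simply too weak. Your alternative reformulation is circular: ``first show that $(x,y;x_{1},\dots,x_{2n-2};x,z)\in C_{L}(L^{2n-2})$'' is, upon unpacking the definition of the centraliser, precisely the identity $w=0$ you are trying to establish, and invoking Theorem~\ref{thm:2.3} afterwards adds nothing.

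The paper does something quite different. It does not try to kill $w$ directly; instead it proves the stronger assertion that $L$ itself is solvable of length at most $n$, by induction on $n$. The inductive step uses the hypothesis on a subring with a full $2^{n+1}-1$ generators (not merely $4n-1$): writing down the balanced commutator $[A,B]$ with $A$ built from $x,y,x_{1},\dots,x_{2^{n}-2}$ and $B$ from $x,z,y_{1},\dots,y_{2^{n}-2}$, one has $[A,B]\in H^{(n+1)}=0$. Now set $z=y$; then $A$ and $B$ both begin with $[x,y]$, and the resulting identity says exactly that $L^{(1)}$ satisfies the same hypothesis with $n$ replaced by $n-1$. By induction $L^{(1)}$ is solvable of length at most $n-1$, hence $L^{(n+1)}=0$. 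The base case $n=2$ is where characteristic $\neq 2$ enters, via Lemma~\ref{lem:1.2}. The substitution $z=y$ and the passage to $L^{(1)}$ are the ideas you are missing.
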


\begin{proof} By Theorem \ref{thm:2.4}, it suffices to show that
a Lie ring $L$ with every  subring with $2^{n+1}-1$ generators  is  solvable of length at most $n$, is solvable of length at most $n$.
If $n=2$, then for all  $x,y,x_{1},x_{2},x,z,y_{1},y_{2}$ in $L$,
$((x,y;x_{1},x_{2}),(x,z;y_{1},y_{2}))=0.$ Let $z=y$, then $$((x,y;x_{1},x_{2}),(x,y;y_{1},y_{2}))=0.$$ That means that
$L^{(1)}$ satisfies the condition of Lemma \ref{lem:1.2}, so $L^{(1)}$ is metabelian. Thus $L^{(3)}=0$ and $L$ is  solvable of length at most $2$.

Assume that $x,y,x_{1},x_{2},\cdots,x_{2^{n}-2},x,z,y_{1},y_{2},\cdots,y_{2^{n}-2}$ are $2^{n+1}-1$ arbitrary  elements of $L$. Since
the subring $\langle x,y,x_{1},x_{2},\cdots,x_{2^{n}-2},x,z,y_{1},y_{2},\cdots,y_{2^{n}-2}\rangle$ is solvable of length at most $n$,
$$[\cdots[[x,y],[x_{1},x_{2}]]\cdots],\cdots,x_{2^{n}-2}]; [\cdots[[x,z],[y_{1},y_{2}]]\cdots],\cdots,y_{2^{n}-2}]\cdots]=0.$$
We let $z=y$, then
$$[\cdots[[x,y],[x_{1},x_{2}]]\cdots],\cdots,x_{2^{n}-2}]; [\cdots[[x,y],[y_{1},y_{2}]]\cdots],\cdots,y_{2^{n}-2}]\cdots]=0.$$
That is, in $L^{(1)}$,  every  subring with $2^{n}-1$ generators  is  solvable of length at most $n-1$. By induction hypothesis,
it follows that $L^{(1)}$ is  solvable of length at most $n-1$. Thus $L^{(n+1)}=0$, as desired.
\end{proof}

\section*{\textit{Acknowledgments}}
The authors are grateful to
an anonymous referee for his (or her) valuable
 comments and suggestions on the
 first version of the article.
This work was supported by the Fundamental Research Funds for the Central Universities (No. 111494343) and NSF of China (No.11026136).

\end{document}